\newcommand{\E}{\mathsf{E}}
\newcommand{\NN}{{\ensuremath{\mathbb N}}}
\newcommand{\RR}{{\ensuremath{\mathbb R}}}
\newcommand{\ZZ}{{\ensuremath{\mathbb Z}}}
\newcommand{\KK}{\ensuremath{\mathbb K}}
\newcommand{\HH}{{\ensuremath{\mathrm H}}}
\newcommand{\kk}{{\ensuremath{\mathbf k}}}
\newcommand{\mm}{{\ensuremath{\mathbf m}}}
\newcommand{\GGamma}{{\ensuremath{\boldsymbol{\Gamma}}}}
\newcommand{\NM}[1][M]{\ensuremath{{\mathbb N}_{#1}}}
\newcommand{\NMSQ}[1][M]{\ensuremath{{\mathbb N}^{2}_{#1}}}
\newcommand{\NMs}[1][M]{\ensuremath{{\mathbb N}^{\star}_{#1}}}
\newcommand{\NMsSQ}[1][M]{\ensuremath{{\mathbb N}^{\star 2}_{#1}}}
\newcommand{\LLa}{\null}
\newcommand{\aaa}{{\ensuremath{\mathbf a}}}
\newcommand{\eqdef}{\mbox{$\buildrel\triangle\over =$}}
\newcommand{\est}[2]{\overset{\curlywedge}{#1}^{#2}\!\!\!}
\newcommand{\estw}[2]{\overset{\curlywedge}{#1}^{#2}\!\!\!\!\!\!\!}
\newcommand{\estwl}[2]{\overset{\curlywedge}{#1}^{#2}\!\!\!\!\!\!\!\!\!\!\!}
\newcommand{\ests}[1]{\overset{\curlywedge}{#1}}
\newcommand{\bbf}[1]{\overline{\mathbf #1}}
\newcommand{\vect}[1]{{\bf #1}}
\newtheorem{prop}{Proposition}
\begin{document}

\title{A Nonlinear Stein Based Estimator for Multichannel Image Denoising}

\author{Caroline Chaux, {\em Member IEEE}, Laurent Duval, {\em Member  IEEE,}\\ Amel Benazza-Benyahia, {\em Member IEEE,}  and Jean-Christophe Pesquet, {\em Senior Member  IEEE}
\thanks{C. Chaux and J.-C. Pesquet are with the Institut Gaspard Monge and CNRS-UMR 8049,
Universit{\'e} de Paris-Est, 77454 Marne-la-Vall{\'e}e
Cedex 2, France. E-mail: \texttt{\{chaux,pesquet\}@univ-mlv.fr}, phone: +33
1 60 95 77 39.}
\thanks{L. Duval is with the Institut Fran{\c c}ais du P{\'e}trole (IFP),
Technology, Computer Science and Applied Mathematics Division,
92500 Rueil-Malmaison, France. E-mail:
\texttt{laurent.duval@ifp.fr}, phone: +33 1 47 52 61 02.}
\thanks{A. Benazza-Benyahia is with the
Unit\'e de Recherche en Imagerie Satellitaire et ses Applications
(URISA), {\'E}cole Sup{\'e}rieure des Communications de Tunis,
Route de Raoued 3.5 Km, 2083 Ariana, Tunisia. E-mail: \texttt{
benazza.amel@supcom.rnu.tn}, phone: +216 1755621.}}

\maketitle

\begin{abstract}
The use of multicomponent images has become widespread with the
improvement of multisensor systems having increased spatial and
spectral resolutions. However, the observed images are  often
corrupted by an additive Gaussian  noise. In this paper, we are
interested in multichannel image denoising based on a multiscale
representation of the images. A multivariate statistical approach
is adopted to take into account both the spatial and the
inter-component correlations existing between the different
wavelet subbands. More precisely, we propose a new parametric
nonlinear estimator which generalizes many reported denoising
methods.  The derivation of the optimal parameters is  achieved by
applying Stein's principle in the multivariate case. Experiments
performed on multispectral remote sensing images clearly indicate
that our method outperforms conventional wavelet denoising
techniques.

\end{abstract}

\begin{keywords}
Multicomponent image, multichannel noise, denoising, multivariate
estimation, block estimate, Stein's principle, nonlinear
estimation, thresholding, $M$-band wavelet transform, dual-tree
wavelet transform, frames.
\end{keywords}

\section{Introduction}
Many real world images  are contaminated by noise during their
acquisition and/or transmission. In particular, multichannel
imaging is  prone to quality degradation due to the imperfectness
of the sensors often operating in different spectral ranges
\cite{Abrams_M_2002_sam_rirpgmrss,Corner_B_2003_ijrs_noi_ersidm}.
In order to alleviate the influence of such disturbing artifacts
on subsequent analysis procedures, denoising appears as a crucial
initial step in multicomponent image enhancement. In this context,
attention has been paid to developing efficient   denoising
methods. Usually, the  noise removal problem is considered as a
regression problem. The challenge thus resides in  finding
realistic statistical models which lead to both efficient and
tractable denoising approaches. To this respect,  linearly
transforming the signal from the spatial  domain  to   a more
suitable  one may drastically improve the denoising performance.
 The rationale for such a transformation  is the observation that some
 representations possessing good energy concentration and
 decorrelation properties tend to simplify the statistical
 analysis of many natural images. For instance,  the Discrete
 Wavelet Transform (DWT)  constitutes  a powerful tool for image
 denoising  \cite{Daubechies_I_1992_book_ten_lw,Mallat_S_1998_book_wav_tsp}.
  The DWT, computed for each channel/component separately,   usually yields ``larger'' coefficients
for signal features and ``smaller'' ones for noise since it forms
an unconditional basis for several classes of regular
signals\cite{Donoho_D_1993_jacha_unc_bobdcse}. For monochannel
signals or images,  the seminal work of Donoho and Johnstone has
shown that a mere wavelet coefficient thresholding constitutes a
simple yet effective technique for  noise  reduction
\cite{Donoho_D_1994_biometrika_ide_saws}.   Based on Stein's
Unbiased Risk Estimator (SURE),  they have proposed the SUREshrink
technique \cite{Donoho_D_1995_jasa_ada_usws}. Subsequently,
several extensions of their work have been performed, \emph{e.g.}
in
 \cite{Nason_G_1996_jrssb_wav_scv,Wang_Y_1996_as_fun_ewslmd,Weyrich_N_1998_tip_wav_sgcvid,KRIM_DONOHO}.
 Recently, the denoising problem in the wavelet domain
 has gained more attention in the case of multichannel images. Indeed,
  the increasing need for multicomponent images in several
  applications such as medical imaging and remote sensing
  has motivated  a great interest in designing tractable
  denoising methods dedicated to this kind of images. Componentwise processing can be performed for each modality,
  but a joint denoising should be preferred in order to  exploit
  the cross-channel similarities in an efficient way \cite{Tang_J_1995_tip_non_mift}.
  The problem of a joint estimation in  the wavelet domain has been formulated
  in \cite{Fletcher_A_2003_icip_mul_et}.  More precisely,   the use of joint threshold estimators
  was investigated in two situations: overcomplete representations of a noisy image
  \cite{Fletcher_A_2003_icip_est_ebdsa}  and multiple observations of the same image \cite{Fletcher_A_2003_icip_mul_et}.
  A scale-adaptive wavelet thresholding was designed for multichannel images in the case of
  an i.i.d. (independent identically
distributed) Gaussian vector noise whose components are
independent and have the same variance
   \cite{Scheunders_p_2004_tip_wav_tmi}. In a Bayesian framework, several prior models have been considered
   such as \textit{multivariate} Bernoulli-Gaussian ones \cite{BenazzaBenyahia_A_2003_nsip_wav_bmidbgm}.
   A generalized Gaussian distribution was also considered for modelling the marginal distribution
    of each subband in each channel and a simple shrinkage was applied depending on the  local
    spectral activity \cite{Pizurica_2002_tip_joi_iismbwbid}. A vector-based least-square
approach was also investigated   in the wavelet domain
\cite{Scheunders_P_2004_icip_lea_sidcmi}. Recently, the
  application of Stein's principle \cite{Stein_C_1981_as_est_mmnd,Pesquet_J_1997_icipa_new_weid,Luisier_F_2007_tip_new_sureaidisowt}
  in the multivariate case has motivated the design
  of a nonlinear estimator 
in \cite{Benazza05}. In this paper, links existing between the proposed nonlinear estimator and Bayesian approaches were discussed.
In particular, the structure of the estimator was motivated by
  a multivariate Bernoulli-Gaussian model reflecting the sparseness of
the wavelet representation as well as the statistical dependencies
existing between the different components. 
We point out that the form of the estimator in \cite{Benazza05} is not the same as the one proposed in this paper. In particular, the estimator in \cite{Benazza05} does not involve any thresholding operation.  Moreover, the estimator does not allow to take into account spatial dependencies but only those existing between the multichannel data at a given position.\\
 In parallel to these works, the idea of performing a joint spatial denoising of
  the coefficients,  rather than using a conventional term-by-term processing, has emerged
  in statistics. This idea, stemming from an incentive for capturing   statistical
dependences between spatial neighboring wavelet coefficients, was
first investigated for single component images in both
non-Bayesian and Bayesian cases
\cite{Cai_T_2001_sankhya_inc_incwe,ABRAMOVITCH_F_2002_CSDA_EMP_BABWFE}.
A successful extension was also carried out in the case of
multichannel images by
considering hybrid (spectral and spatial) neighborhoods \cite{Chaux_C_2005_spie_blo_tmmid}. \\
In this paper, we aim at building a new estimator allowing to take
into account the various correlations existing in multichannel
image data. This estimator  also provides a unifying framework for
several denoising methods proposed in the literature. More
precisely, our  contributions are the following.\\
$\bullet$ The method applies to any
vector-valued data embedded in a multivariate Gaussian noise. As
illustrated later on, many examples of such multivariate contexts
(inter-component, spatial and inter-scale) can be found. They
naturally include  multivariate denoising obtained with vectors of
samples sharing the same spatial position in different channels.\\
$\bullet$ The estimator can be computed in any image
representation domain. For instance, in addition to wavelet
domains, usually considered in conventional denoising methods, we
propose to exploit more general frame decompositions such as the
dual-tree wavelet transform
\cite{Selesnick_I_2005_spm_dua_tcwt,Chaux_C_2006_tip_ima_adtmbwt}.\\
$\bullet$ The computation of the estimated value can be performed
with the help of various observations. Again, our method includes
most of the reported estimation methods acting in that way.
Furthermore, it offers a great flexibility in the choice of these
auxiliary data.\\ $\bullet$ The form of the proposed estimator is
quite general. More precisely, we focus on deriving thresholding
estimators including an exponent parameter and a linear part.
Optimal parameters are derived from Stein's principle.\\
$\bullet$ The denoising approach allows to handle any covariance
matrix between the multichannel noise components.

Notwithstanding its generality, the proposed approach remains
tractable and compares quite favorably with state-of-the-art
methods. The paper is organized as follows. In Section
\ref{sec:background},  we present the relevant background and
introduce notations for a general formulation of the estimator,
based on the concept of Reference Observation Vector. In Section
\ref{sec:prop_estim}, we describe the proposed  multivariate
nonlinear estimator. In Section~\ref{sec:Multi_wav_den}, we give
the specific form taken by this new estimator for  multichannel
images decomposed by a wavelet transform or  an $M$-band dual-tree
wavelet transform.
 In Section \ref{sec:simuls}, experimental results are given for
 remote sensing images showing that the proposed estimator
outperforms existing ones
and some concluding remarks are drawn in Section \ref{sec:concl}.

 Throughout this paper, the following
notations will be used: let $M$ be an integer greater than or
equal to 2, $\NM = \{0,\ldots,M-1\}$ and $\NMs =
\{1,\ldots,M-1\}$; $\ZZ$, $\RR$ and $\RR_+$ are the sets of
integers, reals and positive reals; $\lceil{.}\rceil$ denotes
rounding towards the immediate upper integer. Besides,
$\widehat{a}$ denotes the Fourier transform of a function $a$,
$(\delta_m)_{m\in \ZZ}$ is the Kronecker sequence (equal to 1 if
$m=0$ and 0 otherwise), $\left( f\right)_+ = f$ if $f> 0$ and $0$
otherwise and $\mathds{1}\{A\} \,=\, 1$ if condition $A$ is true
and 0 otherwise.

\section{Background}
\label{sec:background}
\subsection{General formulation of the multichannel estimator}
In multisensor imaging,  $B$ vectors
of  observed data samples $(r^{(1)}({\mathbf{k}}))_{\kk \in \KK}$,
\ldots, $(r^{(B)}({\mathbf{k}}))_{\kk \in \KK}$, are provided
where $B$ is the number of effective sensors and $\KK$ is a set
of spatial indices ($\KK\subset \ZZ^2$). Generally, these data
correspond to noisy realizations of $B$ unknown signals
$(s^{(1)}({\mathbf{k}}))_{\kk \in \KK}$, \ldots,
$(s^{(B)}({\mathbf{k}}))_{\kk \in \KK}$, respectively.
Subsequently, our task will consist in devising methods to reduce
the noise present in the observations.
Two alternatives can be envisaged in this context. On the one
hand, a monochannel approach  builds an estimator
$\est{s}{(b)}(\mathbf{k})$ of $s^{(b)}(\mathbf{k})$ only from the
observations $(r^{(b)}({\mathbf{k}}))_{\kk \in \KK}$, for each
channel $b\in \{1,\ldots,B\}$. On the other hand, a multivariate
technique attempts to estimate $s^{(b)}(\mathbf{k})$  by
accounting not only for the individual data set
$\{r^{(b)}({\mathbf{k}})\}_{\kk \in \KK}$, but also for the
remaining ones $\{r^{(1)}({\mathbf{k}})\}_{\kk \in \KK}$, \ldots,
$\{r^{(b-1)}({\mathbf{k}})\}_{\kk \in \KK}$,
$\{r^{(b+1)}({\mathbf{k}})\}_{\kk \in \KK}$, \ldots,
$\{r^{(B)}({\mathbf{k}})\}_{\kk \in \KK}$.

Thus, one of the simplest relevant denoising approach consists in
calculating the estimated value $\est{s}{(b)}(\mathbf{k})$ of
$s(\mathbf{k})$ as
\begin{equation}
\est{s}{(b)}(\mathbf{k}) =
f(r^{(b)}(\mathbf{k}))
\end{equation}
where $f$ is a scalar function defined on the real line. For
instance, a shrinkage function can be used, possibly involving
some threshold value. Such a technique is commonly used in
regression, when outliers have to be removed in order to improve
the representativity of the fit
\cite{Antoniadis_A_2002_statneer_wtcngn}. Although
$r^{(b)}(\mathbf{k})$ does not necessarily depend on other
observed samples, for structured signal or image analysis,
neighboring samples often present some correlations. Consequently,
an improvement can be expected if    $\est{s}{(b)}(\mathbf{k})$ is
calculated with the help of a subset
$\mathcal{R}_\textrm{ref}^{(b)}(\mathbf{k})$ of observed sample
locations. Average  or median filtering \cite[p.
243--245]{Maitre_H_2003_book_tra_i} are examples where
the estimated sample depends on its neighborhood. 
 As a result, a more general estimation rule is:
\begin{equation}
\est{s}{(b)}(\mathbf{k}) = f\big((r^{(b)}({\mathbf{k}}^\prime))_{{\mathbf{k}}^\prime\in \mathcal{R}_\textrm{ref}^{(b)}({\mathbf{k}})}\big).
\end{equation}
With underlying Markovian assumptions, the context set
$\{r^{(b)}({\mathbf{k}}^\prime)\}_{{\mathbf{k}}^\prime\in
\mathcal{R}_\textrm{ref}^{(b)}({\mathbf{k}})}$ can be restricted
to a limited number of values around the sample location $\kk$.
These values can be gathered in a vector $\bbf{r}^{(b)}(\kk)$
which will be designated as the \textit{Reference Observation
Vector} (ROV). We have then
\begin{equation}
\est{s}{(b)}(\mathbf{k}) = f(\bbf{r}^{(b)}(\kk)).
\label{eq:criterion}
\end{equation}
The multivariate case can also be described by such a formula if
we allow  the ROV to contain additional samples from the remaining
channels in order to exploit the inter-component statistical
dependencies.

Another degree of freedom lies in the choice of a suitable domain
for data representation. While virtually any  transform can be
chosen, special attention has been paid to multiscale transforms.
For example, if a decomposition onto an $M$-band wavelet basis
($M\ge 2$) \cite{Steffen_P_1993_tsp_the_rmbwb} is performed, the
observed images are represented by coefficients
$r_{j,\mathbf{m}}^{(b)}({\mathbf{k}})$ defined at resolution level
$j\ge 1$  and  subband index $\mathbf{m}\in \NM^2$ and the
corresponding ROV will be denoted
$\overline{{\mathbf{r}}}_{j,\mathbf{m}}^{(b)}({\mathbf{k}})$.
Since the noise is usually less correlated than the data, the DWT
is applied in order to provide a sparser representation of the
data of interest, before further analysis
\cite{Daubechies_I_1992_book_ten_lw,Mallat_S_1998_book_wav_tsp}.
The goal becomes to generate estimates
$\estw{s}{(b)}_{j,\mathbf{m}}({\mathbf{k}})$ of the unknown
wavelet coefficients $s_{j,\mathbf{m}}^{(b)}({\mathbf{k}})$ of the
original images:
\begin{equation}
\estw{s}{(b)}_{j,\mathbf{m}}({\mathbf{k}}) =
f(\bbf{r}_{j,\mathbf{m}}^{(b)}(\kk)).
\label{eq:criterion_ROV_transf}
\end{equation}
Then, the inverse DWT is applied to the estimated coefficients in order to
reconstruct the estimated signal $\est{s}{(b)}(\mathbf{k})$ in the spatial domain.
In the literature concerning denoising, two  key issues have been addressed.
The first one lies in the definition of the ROV.
 The second one concerns the choice of an appropriate function $f$ or, in other words, a suitable
 expression of the estimator. In the next subsection, we give a brief overview of the main ROVs proposed until now.

\subsection{Reported ROVs in the DWT domain}
\label{sec:gen_neighborhood} Popular componentwise methods
operating in the  DWT domain are Visushrink
\cite{Donoho_D_1993_jacha_unc_bobdcse} and SUREshrink
\cite{Donoho_D_1995_jasa_ada_usws}. They both employ a very basic
ROV reduced to a scalar value:
\begin{equation}
\bbf{r}^{(b)}_{j,\mathbf{m}}(\kk)= r_{j,\mathbf{m}}^{(b)}({\mathbf{k}}).
\label{eq:scalar_ROV}
\end{equation}
Similarly to what can be done in the spatial domain, the wavelet
coefficients can also be processed  by \textit{block} rather than
individually, again in a mono-channel way
\cite{Hall_P_1997_statist-comput_num_pbtwe,Hall_P_1998_annals-statistics_blo_trcekwm,Hall_P_1999_stat-sinica_min_obtwe,Cai_T_2001_sankhya_inc_incwe,ABRAMOVITCH_F_2002_CSDA_EMP_BABWFE}.
The main motivation for this technique is to exploit the spatial
similarities between neighboring coefficients in a given subband.
The introduction of $d-1$ spatial neighbors ${\mathbf{k}}_1$
,\dots, ${\mathbf{k}}_{d-1}$ of the current sample indexed by
${\mathbf{k}}$ in the ROV allows to take into account the spatial
dependencies:
\begin{equation}
\overline{{\mathbf{r}}}_{j,\mathbf{m}}^{(b)}({\mathbf{k}})=[
r_{j,\mathbf{m}}^{(b)}({\mathbf{k}}),
r_{j,\mathbf{m}}^{(b)}({\mathbf{k}}_1), \ldots,
r_{j,\mathbf{m}}^{(b)}({\mathbf{k}}_{d-1})]^\top.
 \label{eq:spatial_ref}
\end{equation}

For higher dimensional data, the ROV may also consist of
coefficients sharing similar orientations, possibly within
different scales \cite{Portilla_J_2003_tip_ima_dsmgwd}. Another
generalization of the scalar case takes into account the
inter-scale similarities between the current coefficient and the
homologous ones defined at other scales. Based on empirical
observations in image compression
\cite{Shapiro_J_1993_tsp_emb_iczwc}, it has been proposed to use
the current coefficient ancestors at  coarser scales $j+1$, $j+2$,
\ldots, $j_\mathrm{m}$ eventually up to the coarsest level $J$
\cite{romberg_j_2001_tip_bay_tsimwdhmm,sendur_l_2002_tsp_biv_sfwbdeid,benazzabenyahia_a_2004_eusipco_int_mmapemi}: the ROV $\overline{{\mathbf{r}}}_{j,\mathbf{m}}^{(b)}({\mathbf{k}})$ thus includes the corresponding $j_{\mathrm{m}}-j+1$ coefficients at location ${\mathbf{k}}$, in subband $\mathbf{m}$, at resolution level $j$.

In the case of multicomponent data, additional samples borrowed
from the different channels can be included in the ROVs, as shown
in
\cite{Portilla_J_2003_tip_ima_dsmgwd,Pizurica_A_2006_tip_est_ppsimsmid}
for color image as well as for multispectral image denoising.
Basically, the \textit{inter-component} correlations can be taken
into account through the following ROV \cite{Benazza05}:
\begin{equation}
\overline{{\mathbf{r}}}_{j,\mathbf{m}}^{(b)}({\mathbf{k}})=[
r_{j,\mathbf{m}}^{(1)}({\mathbf{k}}),\ldots,
r_{j,\mathbf{m}}^{(B)}({\mathbf{k}})]^\top.
\label{eq:spectr_ref}
\end{equation}
Such a ROV includes all the coefficients of all channels at the
same spatial location, in the same subband $\mathbf{m}$ and at the
same resolution level $j$. In \cite{Chaux_C_2005_spie_blo_tmmid},
a  more sophisticated multicomponent ROV $\overline{{\mathbf{r}}}_{j,\mathbf{m}}^{(b)}({\mathbf{k}})$ has been defined which
combines both spatial \textit{and} multichannel neighbors.

As particular cases, such an ROV encompasses the ROV in
\eqref{eq:spectr_ref} 
and, also the ROV in  \eqref{eq:spatial_ref}.
In addition, the ROV may include coefficients from different subbands.

A final potential extension of the ROVs is related to the choice
of the transform. Indeed, it has been long observed that a
decomposition onto a wavelet basis suffers from a lack of
shift-invariance as well as a poor directionality, resulting in
denoising artifacts at low signal to noise ratios. A simple way
for alleviating these problems is to use a frame decomposition
built from a union of wavelet bases. In particular, a number of
papers
\cite{Coifman_R_1995_was_tra_id,Nason_G_1995_was_sta_wtsa,Pesquet_J_1996_tsp_tim_iowr}
have demonstrated significant improvements in scalar
shrinkage when resorting to a translation-invariant wavelet representation. 
The latter can be viewed as a decomposition onto a union of
shifted versions of a unique wavelet basis.  $M$-band dual-tree
wavelet decompositions \cite{Chaux_C_2006_tip_ima_adtmbwt}
constitute another example of a union of 2 (resp. 4) wavelet bases
in the real (resp. complex) case. The corresponding mother
wavelets are then derived from the first one by Hilbert
transforms, which results in an improved directional analysis. For
such frame decompositions, one can extend the notion of ROV to
include samples produced by the different wavelet basis
decompositions operating in parallel.
These facts will be further developed %
to motivate the application of the general estimator proposed in this paper to an $M$-band dual-tree
 wavelet frame \cite{Chaux_C_2006_tip_ima_adtmbwt}.

\subsection{A unifying framework for shrinkage functions}
\label{sec:unify_framew} In the aforementioned works, the
estimation is often performed by shrinkage, so exploiting the
sparseness of the representation. The most well-known method was
proposed in the pioneering works of Donoho and Johnstone
\cite{Donoho_D_1993_jacha_unc_bobdcse}.
The estimating function $f$ is then given by
\begin{equation}
f(r_{j,{\mathbf{m}}}^{(b)}({\mathbf{k}})) =
\mathrm{sign}(r_{j,{\mathbf{m}}}^{(b)}({\mathbf{k}}))\max\{|r_{j,{\mathbf{m}}}^{(b)}({\mathbf{k}})|-\lambda,0\}
\label{eq:shrinkage_soft_plus1}
\end{equation}
for a soft thresholding with threshold value $\lambda \ge 0$,  where
$\mathrm{sign(\cdot)}$ is the signum function.
Equivalently, by using the ROV in (\ref{eq:scalar_ROV}),
the estimating function can be expressed as
\begin{equation}
f(\bbf{r}_{j,{\mathbf{m}}}^{(b)}({\kk})) =
\Bigg(\frac{|\bbf{r}_{j,{\mathbf{m}}}^{(b)}(\kk)|-\lambda}{|\bbf{r}_{j,{\mathbf{m}}}^{(b)}{\kk})|}\Bigg)_+
r_{j,{\mathbf{m}}}^{(b)}({\mathbf{k}}).
\label{eq:shrinkage_soft_plus}
\end{equation}
Some works \cite{Gao_H_1998_jcgs_wav_sdnng} have focused on the
improvement of the scalar shrinkage rule, yielding for instance
smoother functions such as the garrote shrinkage based on
\cite{Breitman_L_1995_technometrics_bet_srng}, which is defined
as:
\begin{equation}
f(\bbf{r}_{j,{\mathbf{m}}}^{(b)}({\kk})) =
\Bigg(\frac{|\bbf{r}_{j,{\mathbf{m}}}^{(b)}(\kk)|^2-\lambda}{|\bbf{r}_{j,{\mathbf{m}}}^{(b)}(\kk)|^2}\Bigg)_+\
r_{j,{\mathbf{m}}}^{(b)}({\mathbf{k}}).
\label{eq:shrinkage_garrote}
\end{equation}
Several  authors have proposed vector-like generalizations to the
scalar shrinkage. Cai and Silverman
\cite{Cai_T_2001_sankhya_inc_incwe}, have proposed a block
estimator which takes into account information on neighboring
coefficients in each subband, as expressed as in
\eqref{eq:spatial_ref}. This estimator dominates the maximum likelihood estimator 
when the block size is greater than 2. This method, named ``NeighBlock'',
consists in applying the following shrinkage rule:
\begin{equation}
\estw{\vect{s}}{(b)}_{j,{\mathbf{m}}}({\mathbf{k}}) =
\left(\frac{\|\bbf{r}_{j,{\mathbf{m}}}^{(b)}(\kk)\|^2-\bar{\lambda}d\sigma^2}{\|\bbf{r}_{j,{\mathbf{m}}}^{(b)}(\kk)\|^2}\right)_+\
\vect{r}_{j,{\mathbf{m}}}^{(b)}({\mathbf{k}})
\label{eq:neighblock}
\end{equation}
where  $\bar{\lambda} > 0$, $d$ is the number of components in the
ROV, $\vect{r}_{j,{\mathbf{m}}}^{(b)}({\mathbf{k}})$ is a subpart
of the ROV, $\estw{\vect{s}}{(b)}_{j,{\mathbf{m}}}({\mathbf{k}})$
is the associated vector of estimated values, $\|.\|$ denotes the
classical Euclidean norm of $\mathbb{R}^{d}$ and $\sigma^2$
denotes the noise variance. Such a function is clearly reminiscent
of the scalar garrote shrinkage defined in
\eqref{eq:shrinkage_garrote}. Based on an asymptotic minimax
study, Cai and Silverman suggested appropriate values for
$\bar{\lambda}$ and $d$. They considered both overlapping and
non-overlapping variants of this approach. In particular, the
so-called ``NeighCoeff'' method corresponds to  the case when
$\estw{\vect{s}}{(b)}_{j,\mathbf{m}}({\mathbf{k}})$ reduces to a
scalar estimate. Then, the corresponding estimating function is:
\begin{equation}
f(\bbf{r}_{j,{\mathbf{m}}}^{(b)}({\kk})) =
\Bigg(\frac{\|\bbf{r}_{j,{\mathbf{m}}}^{(b)}(\kk)\|^2-\bar{\lambda}d\sigma^2}{\|\bbf{r}_{j,{\mathbf{m}}}^{(b)}(\kk)\|^2}\Bigg)_+\
r_{j,{\mathbf{m}}}^{(b)}({\mathbf{k}}).
\label{eq:neighcoef}
\end{equation}

In the meantime,  \c{S}endur and Selesnick
\cite{Sendur_L_2002_spl_biv_slve} introduced a Bayesian approach
allowing to model inter-scale dependencies between two consecutive levels. These authors
con\-se\-quent\-ly formulated the problem in the 2-band wavelet
domain. In their approach, the ROV is given by
$\overline{{\mathbf{r}}}_{j,\mathbf{m}}^{(b)}({\mathbf{k}})=[
r_{j,\mathbf{m}}^{(b)}({\mathbf{k}}),r_{j+1,\mathbf{m}}^{(b)}(\lceil{\frac{\mathbf{k}}{2}\rceil})]^\top$,
$r_{j+1,\mathbf{m}}^{(b)}(\lceil{\frac{\mathbf{k}}{2}\rceil})$
being the ``parent'' of $r_{j,\mathbf{m}}^{(b)}({\mathbf{k}})$ (at
the next coarser resolution). By considering as a prior model the
non-Gaussian bivariate probability density function
\begin{equation}
p(s_{j,\mathbf{m}}^{(b)}({\mathbf{k}}),s_{j+1,\mathbf{m}}^{(b)}(\lceil{\frac{\mathbf{k}}{2}\rceil}))
\propto \exp\Big(-\frac{\sqrt{3}}{\sigma_s}\sqrt{
\big|s_{j,\mathbf{m}}^{(b)}({\mathbf{k}})\big|^2+\big|s_{j+1,\mathbf{m}}^{(b)}(\lceil{\frac{\mathbf{k}}{2}\rceil})\big|^2}\;\Big), \qquad \sigma_s > 0
\end{equation}
the following Maximum A Posteriori (MAP) estimator was derived:
\begin{equation}
f(\bbf{r}_{j,{\mathbf{m}}}^{(b)}({\kk}))= \left(
\frac{\|\bbf{r}_{j,{\mathbf{m}}}^{(b)}(\kk)\|-\frac{\sqrt{3}\sigma^2}{\sigma_s}}{\|\bbf{r}_{j,{\mathbf{m}}}^{(b)}(\kk)\|}\right)_+r_{j,\mathbf{m}}^{(b)}({\mathbf{k}})
\label{eq:bivariate}
\end{equation}
where the noise variance is again denoted by $\sigma^2$.\\
More recently, in the context of signal restoration problems,
Combettes and Wajs \cite{Combettes_P_2005_siam-mms_sig_rpfbs} have
studied the properties of proximity operators corresponding to the
solutions of some convex regularization problems. In particular, 
an interpretation
of one of their results is the following. Let us adopt a Bayesian formulation
by assuming that the vector $\bbf{r}_{j,{\mathbf{m}}}^{(b)}({\kk})$
is a noisy observation of the vector $\vect{s}^{(b)}_{j,{\mathbf{m}}}({\kk})$
of multichannel coefficients at location $\kk$, embedded in white Gaussian noise
with variance $\sigma^2$.
Further assume that the vectors $\vect{s}^{(b)}_{j,{\mathbf{m}}}({\kk})$
are independent of the noise, mutually independent and have a prior distribution proportional
to $\exp(-\overline{\lambda} \|\cdot\|)$ with $\overline{\lambda} > 0$. The MAP estimation
of $\vect{s}^{(b)}_{j,{\mathbf{m}}}$ is found by solving
the optimization problem:
\begin{equation}
\min_{\vect{u}\in \mathbb{R}^B} \overline{\lambda}
\| \vect{u}\|+\frac{1}{2\sigma^2}
\|\vect{u}-\bbf{r}_{j,{\mathbf{m}}}^{(b)}({\kk})\|^2.
\end{equation}
It is shown in \cite{Combettes_P_2005_siam-mms_sig_rpfbs}
that the minimizer of the MAP criterion is
\begin{equation}
\estw{\vect{s}}{(b)}_{j,{\mathbf{m}}} =
\Big(\frac{\|\bbf{r}_{j,{\mathbf{m}}}^{(b)}({\kk})\|-
\overline{\lambda}\sigma^2}{\|\bbf{r}_{j,{\mathbf{m}}}^{(b)}({\kk})\|}\Big)_+\bbf{r}_{j,{\mathbf{m}}}^{(b)}({\kk}).
\end{equation}

The three previous block-thresholding estimators have been derived
from different perspectives and they have also been applied in
different ways. However, it is possible to describe them through a
general shrinkage factor
$\eta_{\lambda}(\|\vect{\bar{r}}_{j,\mathbf{m}}^{(b)}\|^\beta)$,
where
\begin{equation}
\forall \tau\in \RR_+,\qquad
\eta_{\lambda}(\tau)=\Big(\frac{\tau-\lambda}{\tau}\Big)_+
\label{eq:rule}
\end{equation}
and $\beta>0$ and $\lambda\ge 0$ take specific values in each of the aforementioned block estimators.
We also remark that
this generalized shrinkage obviously encompasses the soft and
garrote thresholdings provided in \eqref{eq:shrinkage_soft_plus} and \eqref{eq:shrinkage_garrote}.

\section{Proposed nonlinear estimator}
\label{sec:prop_estim}
\subsection{Notations}
We will now propose a more general adaptive estimator that can be
applied in any representation domain. We will therefore drop the
indices $j$ and ${\mathbf{m}}$  and we will consider the general
situation where an observation sequence $(\bbf{r}(\kk))_{\kk \in
\ZZ^2}$ of $d$-dimensional real-valued vectors ($d\in \NN$, $d >
1$) is defined as
\begin{equation}
\forall \kk \in \ZZ^2,\qquad \bbf{r}(\kk)=\bbf{s}(\kk)+\bbf{n}(\kk),
\label{eq:modelgen}
\end{equation}
where $(\bbf{n}(\kk))_{\kk\in\ZZ^2}$ is a
$\mathcal{N}(\mathbf{0},\GGamma^{(\bbf{n})})$ noise
and $(\bbf{s}(\kk))_{\kk\in\ZZ^2}$ is an identically distributed
second-order random sequence which is independent  of
$(\bbf{n}(\kk))_{\kk\in \ZZ^2}$. We will assume that  the
covariance matrix $\GGamma^{(\bbf{n})}$ is invertible.
These random vectors are decomposed as
\begin{equation}
\bbf{r}({\mathbf{k}}) =
\begin{bmatrix}
r({\mathbf{k}})\\
\tilde{\mathbf{r}}({\mathbf{k}})
\end{bmatrix},\qquad
\bbf{s}({\mathbf{k}}) =
\begin{bmatrix}
s({\mathbf{k}})\\
\tilde{\mathbf{s}}({\mathbf{k}})
\end{bmatrix},\qquad
\bbf{n}({\mathbf{k}}) =
\begin{bmatrix}
n({\mathbf{k}})\\
\tilde{\mathbf{n}}({\mathbf{k}})
\end{bmatrix}
\label{eq:decompos}
\end{equation}
where $r({\mathbf{k}})$, $s({\mathbf{k}})$ and $n({\mathbf{k}})$
are scalar random variables. We aim at estimating the first
component  $s(\kk)$ of the vector $\bbf{s}(\kk)$ using an
observation sequence $(\bbf{r}(\kk))_{\kk \in \KK}$ where $\KK$ is
a finite subset of $\ZZ^2$. We recall that, although
\eqref{eq:modelgen} does not introduce an explicit dependence
between $s(\kk)$ and the vector $\tilde{\mathbf{r}}(\kk)$ of the
last $d-1$ components of $\bbf{r}(\kk)$, such a statistical
dependence may exist, due to the dependence between the components
of  $\bbf{s}(\kk)$ themselves. The estimated sequence will be
denoted by $(\ests{s}(\kk))_{\kk \in \KK}$.

\subsection{Form of the adaptive estimator}
In order to gain more flexibility in the denoising procedure,
the following generalized form of shrinkage estimate will be considered:
\begin{equation}
\ests{s}(\kk) = \eta_{\lambda}(\|\bbf{r}(\kk)\|_{\LLa}^\beta) \;
\vect{q}^{\top}\bbf{r}(\kk),\label{estimator}
\end{equation}
where
the function $\eta_{\lambda}(\cdot)$ is given by \eqref{eq:rule}
with $\lambda \ge 0$, $\beta >0$ and $\vect{q}\in \RR^d$.
The vector $\vect{q}$ corresponds to a linear
parameter. We notice, in particular, that if the threshold value
$\lambda$ is set to zero, the considered estimator reduces to $
\estw{s}{(b)}\;\;({\mathbf{k}})=
\vect{q}^{\top}\vect{\bar{r}}({\mathbf{k}}).$ This shows that
linear estimators constitute a subset of the considered class of
estimators. In addition, by an appropriate choice of the vector
$\vect{q}$, estimators consisting of a preliminary decorrelation
of the data  followed by a thresholding step also appear as
special cases of the proposed estimator. Note that, in
conventional multichannel data analysis, it is customary to
decorrelate the data before processing. The most common examples
are fixed channel conversions (like those from stereo to mono  in
sound processing or from RGB to luminance/chrominance components
in color image or video processing). When the data modalities are
less standardized (for instance in satellite imaging),  adaptive
methods such as the Karhunen-Lo{\`e}ve transform or Independent
Component Analysis (ICA) \cite{Cardoso_J_1993_ieepf_bli_bngs} can
be used. The latter adaptive transforms can also be performed in
the transformed
domain, \emph{e.g.} in each subband.\\
Furthermore, in order to limit the computational complexity in the
implementation of the estimator, it can be useful to constrain the
vector $\vect{q}$ to belong to some vector subspace of reduced
dimension $d'\le d$. Let $\mathbf{P}\in \RR^{d\times d'}$ be the
matrix whose column vectors form a basis of this subspace. We have
then $\vect{q} = \mathbf{P}\aaa$ where $\vect{a}\in \RR^{d'}$.
As a simple example, by choosing
\[
\mathbf{P}
= \begin{bmatrix}
\mathbf{I}_{d'}\\
\mathbf{O}
\end{bmatrix}
\]
where $\mathbf{I}_{d'}$ denotes the identity matrix of size $d'\times d'$,
we see that we only introduce in the estimator a linear combination of the first $d'$ components of
the vector $\overline{\mathbf{r}}(\mathbf{k})$.
In summary, the proposed form of the estimator is parameterized by
$\lambda$, $\beta$ and $\aaa$
for a given choice of $\mathbf{P}$.  \\

Our objective is to find the  optimal parameters
that minimize the quadratic risk defined as
$R(\lambda,\beta,\aaa) = \E[|s(\kk) - \ests{s}(\kk)|^2]$, for a predefined value of 
$\mathbf{P}$.
It is easy to show that the risk reads:
\begin{align}
R(\lambda,\beta,\aaa) &= \E[|s(\kk) - \ests{s} (\kk)|^2] \nonumber\\
&= \E[|s(\kk)|^2]+\E[|\eta_{\lambda}(\|  \bbf{r}(\kk) \|_{\LLa}^\beta)\aaa^{\top}\mathbf{P}^\top \bbf{r}(\kk)  |^2]-2 \E[  \eta_{\lambda}(\|  \bbf{r}(\kk) \|_{\LLa}^\beta)\aaa^{\top}\mathbf{P}^\top \bbf{r}(\kk)  s(\kk)].
\end{align}
 The minimization of the risk is not obvious for any observation model.
 Indeed, since the $s({\mathbf{k}})$ are unknown, it seems impossible to express the rightmost term
 $\E[  \eta_{\lambda}(\|  \bbf{r}(\kk) \|_{\LLa}^\beta)\aaa^{\top}\mathbf{P}^\top \bbf{r}(\kk)  s(\kk)]$.
 However, in the case of a \textit{Gaussian} noise,  it  is possible to apply an extension of Stein's
 principle \cite{Stein_C_1981_as_est_mmnd} for deriving an explicit expression. In the next subsection, we will state
 and prove such extended Stein's formula.

\subsection{Stein's formula}
\begin{prop}
 \label{prop:Steinextended}
Let $f\;:\;\mathbb{R}^d \to \mathbb{R}$ be a continuous,
almost everywhere differentiable function such that:
\begin{align}
&\forall \boldsymbol{\theta} \in \RR^d, \, \lim_{\|\mathbf{t}\|\to +\infty} f(\mathbf{t}) \exp\Big(-\frac{(\mathbf{t}-\boldsymbol{\theta})^\top ({\GGamma^{(\bbf{n})}})^{-1}(\mathbf{t}-\boldsymbol{\theta})}{2}\Big)=0;\\
&\E[|f(\bbf{r}(\kk))|^2]<+\infty \quad \mbox{and} \quad  \E\big[\big\|\frac{\partial f(\bbf{r}(\kk))}{\partial \bbf{r}(\kk)}\big\|\big]<+\infty.
\end{align}
Then,
\begin{equation}
\E[f(\bbf{r}(\kk))s(\kk)] =
\E[f(\bbf{r}(\kk))r(\kk)]-
 \E\Big[\frac{\partial f(\bbf{r}(\kk))}{\partial \bbf{r}(\kk)}\Big]^\top\E[\bbf{n}
 n].
\label{eq:stein}
\end{equation}
\end{prop}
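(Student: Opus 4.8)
The plan is to derive \eqref{eq:stein} from the classical multivariate Stein lemma for a centred Gaussian vector, applied conditionally on $\bbf{s}(\kk)$. First I would reduce the statement: since $r(\kk)=s(\kk)+n(\kk)$ by \eqref{eq:modelgen}--\eqref{eq:decompos}, the identity \eqref{eq:stein} is equivalent to
\[
\E[f(\bbf{r}(\kk))\,n(\kk)] = \E\Big[\frac{\partial f(\bbf{r}(\kk))}{\partial \bbf{r}(\kk)}\Big]^\top\E[\bbf{n}n].
\]
All expectations involved are finite: $\E[f(\bbf{r})s]$ and $\E[f(\bbf{r})r]$ are controlled by $\E[|f(\bbf{r})|^2]<\infty$, $\E[|s(\kk)|^2]<\infty$ (second-order sequence) and $\E[|n(\kk)|^2]<\infty$ (Gaussian) via Cauchy--Schwarz, while $\E[\|\partial f/\partial\bbf{r}\|]<\infty$ bounds the right-hand side. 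So it suffices to prove the displayed reduced identity.

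Next I would condition on $\bbf{s}(\kk)=\boldsymbol{\theta}$. Since $(\bbf{n}(\kk))$ is independent of $(\bbf{s}(\kk))$, the conditional law of $\bbf{n}(\kk)$ is still $\mathcal{N}(\mathbf{0},\GGamma^{(\bbf{n})})$ and $\bbf{r}(\kk)=\boldsymbol{\theta}+\bbf{n}(\kk)$. Writing $g_{\boldsymbol{\theta}}(\mathbf{x})=f(\boldsymbol{\theta}+\mathbf{x})$, the reduced identity will follow, after integrating over $\boldsymbol{\theta}$ by Fubini (legitimate under the stated integrability since $\E[\|\nabla f(\bbf{r})\|]\,\|\E[\bbf{n}n]\|<\infty$), from the core fact: if $\bbf{n}\sim\mathcal{N}(\mathbf{0},\GGamma)$ with $\GGamma$ invertible and $g$ is continuous, almost everywhere differentiable, with $\E[\|\nabla g(\bbf{n})\|]<\infty$ and $\lim_{\|\mathbf{t}\|\to+\infty} g(\mathbf{t})\exp(-\tfrac12\mathbf{t}^\top\GGamma^{-1}\mathbf{t})=0$, then
\[
\E[\bbf{n}\,g(\bbf{n})] = \GGamma\,\E[\nabla g(\bbf{n})].
\]
Taking the first coordinate and undoing the conditioning (using $\nabla g_{\boldsymbol{\theta}}(\mathbf{x})=\nabla f(\boldsymbol{\theta}+\mathbf{x})=\partial f(\bbf{r})/\partial\bbf{r}$ and that $\GGamma^{(\bbf{n})}\mathbf{e}_1$, the first column of $\GGamma^{(\bbf{n})}$, equals $\E[\bbf{n}n]$) yields exactly the reduced identity.

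To establish the core fact, let $\phi$ denote the $\mathcal{N}(\mathbf{0},\GGamma)$ density; a direct computation gives $\nabla\phi(\mathbf{x})=-\GGamma^{-1}\mathbf{x}\,\phi(\mathbf{x})$, equivalently $\mathbf{x}\,\phi(\mathbf{x})=-\GGamma\,\nabla\phi(\mathbf{x})$. Hence $\E[\bbf{n}g(\bbf{n})]=\int_{\RR^d}\mathbf{x}\,g(\mathbf{x})\phi(\mathbf{x})\,d\mathbf{x}=-\GGamma\int_{\RR^d}g(\mathbf{x})\nabla\phi(\mathbf{x})\,d\mathbf{x}$, and integrating by parts coordinate-wise turns this into $\GGamma\int_{\RR^d}\nabla g(\mathbf{x})\phi(\mathbf{x})\,d\mathbf{x}=\GGamma\,\E[\nabla g(\bbf{n})]$. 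The boundary terms are $g(\mathbf{t})\phi(\mathbf{t})$ evaluated as $\|\mathbf{t}\|\to\infty$; in the shifted form $\mathbf{t}=\boldsymbol{\theta}+\mathbf{x}$ this is precisely hypothesis~(1) of the Proposition, so they vanish.

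The step I expect to be the main obstacle is the rigorous justification of this integration by parts when $f$ is only continuous and almost everywhere differentiable rather than $C^1$. One must argue via absolute continuity of $f$ along almost every line parallel to the coordinate axes (a Fubini-type reduction to the one-dimensional case), using continuity of $f$ together with $\E[\|\partial f/\partial\bbf{r}\|]<\infty$ to make the one-dimensional primitives legitimate, and then recombine with dominated convergence; the decay hypothesis is what kills the endpoint contributions. Everything else --- the conditioning, the Fubini interchange over $\bbf{s}$, and identifying $\GGamma^{(\bbf{n})}\mathbf{e}_1$ with $\E[\bbf{n}n]$ --- is routine bookkeeping under the stated hypotheses.
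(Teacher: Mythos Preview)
Your proposal is correct, but it takes a different route from the paper. The paper's proof is essentially a one-liner: it invokes the multivariate Stein identity for a vector-valued map $\vect{T}:\RR^d\to\RR^d$ (cited from Stein, 1981) in the form
\[
\E[\vect{T}(\bbf{r}(\kk))\bbf{s}^\top(\kk)] = \E[\vect{T}(\bbf{r}(\kk))\bbf{r}^\top(\kk)] - \E\Big[\frac{\partial \vect{T}(\bbf{r}(\kk))}{\partial \bbf{r}^\top(\kk)}\Big]\GGamma^{(\bbf{n})},
\]
specializes to $\vect{T}(\vect{t})=[f(\vect{t}),0,\ldots,0]^\top$, and reads off the $(1,1)$ entry. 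You instead derive the scalar identity from scratch: reduce to $\E[f(\bbf{r})n]=\E[\nabla f(\bbf{r})]^\top\E[\bbf{n}n]$, condition on $\bbf{s}(\kk)$, and prove the core Gaussian identity $\E[\bbf{n}g(\bbf{n})]=\GGamma\,\E[\nabla g(\bbf{n})]$ via the relation $\mathbf{x}\phi(\mathbf{x})=-\GGamma\nabla\phi(\mathbf{x})$ and integration by parts. The paper's route is shorter and highlights that the proposition is just a coordinate of a known matrix identity; your route is more self-contained, makes the role of each hypothesis transparent (in particular where the decay condition kills boundary terms), and honestly flags the regularity subtlety for merely a.e.\ differentiable $f$ --- a point the paper's proof silently delegates to the cited reference.
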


\begin{proof}
Let $\vect{T}\;:\;\mathbb{R}^d \to \mathbb{R}^d$ be a continuous,
almost everywhere differentiable function such that
\begin{align}
&\forall \boldsymbol{\theta} \in \RR^d, \, \lim_{\|\mathbf{t}\|\to +\infty} \vect{T}(\mathbf{t}) \exp\Big(-\frac{(\mathbf{t}-\boldsymbol{\theta})^\top ({\GGamma^{(\bbf{n})}})^{-1}(\mathbf{t}-\boldsymbol{\theta})}{2}\Big)=\vect{0};\\
&\E[\|\vect{T}(\bbf{r}(\kk))\|^2]<+\infty \quad \mbox{and} \quad  \E\Big[\big\|\frac{\partial \vect{T}(\bbf{r}(\kk))}{\partial \bbf{r}^\top(\kk)}\big\|_{\mathrm{F}}\Big]<+\infty.
\end{align}
where $\|\cdot\|_{\mathrm{F}}$ is the Frobenius norm. In this
multivariate context, Stein's principle
\cite{Stein_C_1981_as_est_mmnd} can be expressed as
\begin{equation}
\E[\vect{T}(\bbf{r}(\kk))\bbf{s}^\top(\kk)] =
\E[\vect{T}(\bbf{r}(\kk))\bbf{r}^\top(\kk)]-
\E[\frac{\partial \vect{T}(\bbf{r}(\kk))}{\partial \bbf{r}^\top(\kk)}] \GGamma^{(\bbf{n})}.
\end{equation}
Eq. \eqref{eq:stein} follows by choosing
$ \vect{T}\;:\;\vect{t} \mapsto [f(\vect{t}),0,\ldots,0]^\top $
and focusing on the top-left element of matrix
$\E[\vect{T}(\bbf{r}(\kk))\bbf{s}^\top(\kk)]$.
\end{proof}

\subsection{Risk expression}

We define the function $f\,:\,\vect{u}\mapsto  \eta_{\lambda}(\|
\vect{u}\|_{\LLa}^\beta) \; \aaa^{\top}\mathbf{P}^\top\vect{u}$.
It is easy  to check that this function $f$ satisfies the
conditions of Prop. \ref{prop:Steinextended}. Consequently, the
last term can be calculated thanks to \eqref{eq:stein}. This
yields
\begin{equation}
\E[s(\kk) f(\bbf{r}(\kk))]=\E[r(\kk) f(\bbf{r}(\kk))]
-\E\Big[\frac{\partial f\big(\bbf{r}(\kk)\big)}{\partial \bbf{r}(\kk)} \Big]^{\top}
\GGamma^{(\bbf{n},n)}
\label{eq:riskstein}
\end{equation}
where $\GGamma^{(\bbf{n},n)}=\E[\bbf{n}(\kk)n(\kk)]$.
We then have
\begin{align}
\frac{\partial f\big(\bbf{r}(\kk)\big)}{\partial
\bbf{r}(\kk)}
&= \aaa^{\top}\mathbf{P}^{\top}\bbf{r}(\kk)\frac{\partial \eta_{\lambda}(\|\bbf{r}(\kk)\|_{\LLa}^\beta)}{\partial \bbf{r}(\kk)} + \eta_{\lambda}(\|\bbf{r}(\kk)\|_{\LLa}^\beta)\frac{\partial \aaa^{\top}\mathbf{P}^{\top}\bbf{r}(\kk)}{\partial \bbf{r}(\kk)}  \nonumber \\
&= \frac{\lambda
\beta}{\|\bbf{r}(\kk)\|_{\LLa}^{\beta+1}}{\mathds{1}}\{\parallel
\overline{{\mathbf{r}}}(\kk)\|_{\LLa}^\beta>\lambda\}\frac{\partial\|\bbf{r}(\kk)\|_{\LLa}}{\partial
\bbf{r}(\kk)}\,\bbf{r}^\top(\kk) \mathbf{P} \aaa
 + \eta_{\lambda}(\|\bbf{r}(\kk)\|_{\LLa}^\beta) \mathbf{P} \aaa   \nonumber \\
&=
 \eta_{\lambda}(\|\bbf{r}(\kk)\|_{\LLa}^\beta) \mathbf{P} \aaa + \lambda  \LLa \bbf{r}(\kk) \boldsymbol{\xi}^\top(\kk)\mathbf{P}\aaa
\end{align}
where $\boldsymbol{\xi}(\kk) ={\mathds{1}}\{\parallel
\overline{{\mathbf{r}}}(\kk)\|_{\LLa}^\beta>\lambda\}\frac{\beta}{\parallel
\overline{{\mathbf{r}}}(\kk)\|_{\LLa}^{\beta+2}}\,\bbf{r}(\kk).$
This leads to the following expression of the risk:
\begin{equation}
R(\lambda,\beta,\aaa) =
\E[|r(\kk)-f(\bbf{r}(\kk))|^2]
+2 \E[\eta_{\lambda}(\|\bbf{r}(\kk)\|_{\LLa}^\beta)] \aaa^\top\mathbf{P}^\top \GGamma^{(\bbf{n},n)} + 2 \lambda\aaa^\top\mathbf{P}^\top \E[ \boldsymbol{\xi}(\kk)\bbf{r}^\top(\kk)]\LLa  \GGamma^{(\bbf{n},n)}-\sigma^2
\label{eq:risk}
\end{equation}
where $\sigma^2 = \E[|n(\kk)|^2]$.

We will now look for parameters $\lambda$, $\beta$ and $\aaa$
  that minimize the risk expression \eqref{eq:risk} for a given choice of $\mathbf{P}$.

\subsection{Determination of the parameter $\aaa$}
\label{sec:estim_q} We first aim at calculating the  value of $\aaa$
that minimizes the risk \eqref{eq:risk}. By noticing that the risk is a quadratic convex function of $\aaa$,
the minimization can be performed by
differentiating
w.r.t. $\aaa$ and then finding $\aaa^*(\lambda,\beta)$ such that
$\partial R/\partial \aaa\big(\lambda,\beta,\aaa^*(\lambda,\beta)\big)=0$.
It readily follows that
\begin{multline}
\aaa^*(\lambda,\beta)= \big(\mathbf{P}^\top\E[\eta_\lambda^2(\|\bbf{r}(\kk)\|_{\LLa}^\beta)\bbf{r}(\kk)\bbf{r}^\top(\kk)]\mathbf{P}\big)^{-1}\, \mathbf{P}^\top\big(\E[\eta_\lambda(\|\bbf{r}(\kk)\|_{\LLa}^\beta)r(\kk)\bbf{r}(\kk) ]\\
-\E[\eta_\lambda(\|\bbf{r}(\kk)\|_{\LLa}^\beta)]{\GGamma}^{(\bbf{n},n)} - \lambda\, \E[{\boldsymbol{\xi}}(\kk) \bbf{r}^{\top}(\kk)]\LLa{\GGamma}^{(\bbf{n},n)}\big).
\label{eq:opta}
\end{multline}

\subsection{Determination of the parameters $\lambda$ and $\beta$}
\label{sec:estim_beta_lambda} Starting from \eqref{eq:risk}, the
risk $R(\lambda,\beta,\aaa)$ can be re-expressed as
$R(\lambda,\beta,\aaa)=\E[\rho_{\lambda,\beta,\aaa}(\kk)]$ where
\begin{equation}
\rho_{\lambda,\beta,\aaa}(\kk)= \alpha_2(\kk) \lambda^2 +
\alpha_1(\kk) \lambda + \alpha_0(\kk)
\end{equation}
and
\begin{align*}
\alpha_0(\kk) & = r^2(\kk)-\sigma^2+ {\mathds{1}}\{\parallel
\overline{{\mathbf{r}}}(\kk)\|_{\LLa}^\beta>\lambda\}
\aaa^{\top}\mathbf{P}^{\top}\Big(2\,\GGamma^{(\bbf{n},n)}+\big(\aaa^{\top}\mathbf{P}^{\top}\bbf{r}(\kk)-2\,r(\kk)\big)\,\bbf{r}(\kk)\Big)\nonumber\\
 \alpha_1(\kk) & =
 2\aaa^{\top}\mathbf{P}^{\top}\Big(\frac{\big(r(\kk)-\aaa^{\top}\mathbf{P}^{\top}\bbf{r}(\kk)\big)\,\bbf{r}(\kk)
 - \GGamma^{(\bbf{n},n)}}{\|\bbf{r}(\kk)\|_{\LLa}^\beta}
 +\beta\,\bbf{r}(\kk)\frac{\bbf{r}^{\top}(\kk)\LLa
 \GGamma^{(\bbf{n},n)}}{\|\bbf{r}(\kk)
 \|_{\LLa}^{\beta+2}}\Big){\mathds{1}}\{\parallel \overline{{\mathbf{r}}}(\kk)\|_{\LLa}^\beta>\lambda\}
\nonumber\\
\alpha_2(\kk) & = {\mathds{1}}\{\parallel
\overline{{\mathbf{r}}}(\kk)\|_{\LLa}^\beta>\lambda\}
\frac{\big(\aaa^{\top}\mathbf{P}^{\top}\bbf{r}(\kk)\big)^2}{\|\bbf{r}(\kk)\|_{\LLa}^{2\beta}}.
\end{align*}
In practice, under standard mixing assumptions for
$(\bbf{n}(\kk))_{k\in \ZZ^2}$ and $(\bbf{s}(\kk))_{k\in \ZZ^2}$
\cite{Guyon_X_1993_book_cha_armsa}, $R(\lambda,\beta,\aaa)$ can be
estimated via an empirical average $\ests{R}(\lambda,\beta,\aaa)$
computed over $\KK$, provided that the data length
$K=\mathrm{card}(\KK)$ is large enough. Following a procedure
similar to the search implemented for the SUREshrink estimator, we
will subsequently determine optimal values of $\lambda$ and
$\beta$ for this consistent risk estimate. More precisely, the
norms of the ROVs $(\|\bbf{r}(\kk)\|_{\LLa})_{\kk\in \KK}$ are
first sorted in descending order, so that
$\|\bbf{r}(\kk_1)\|_{\LLa} \geq \|\bbf{r}(\kk_2)\|_{\LLa} \geq
\ldots \geq \|\bbf{r}(\kk_{K})\|_{\LLa}$. To study the variations
of $\ests{R}(\lambda,\beta,\aaa)$ w.r.t. $\lambda$, we consider
the case when $\lambda \in I_{i_0}$ with $i_0 \in
\{1,\ldots,K+1\}$ and
\begin{equation}
I_{i_0} =
\begin{cases}
\big[\|\bbf{r}(\kk_{1})\|_{\LLa}^\beta,\infty\big)
& \mbox{if $i_0 = 1$}\\
\big[\|\bbf{r}(\kk_{i_0})\|_{\LLa}^\beta,\|\bbf{r}(\kk_{i_0-1})\|_{\LLa}^\beta\big) & \mbox{if $i_0 \in \{2,\ldots, K\}$}\\
\big[0,\|\bbf{r}(\kk_{K})\|_{\LLa}^\beta\big) & \mbox{if $i_0 = K+1$}.
\end{cases}
\end{equation}
On the interval $I_{i_0}$,
the risk estimate then takes the following form:\footnote{We adopt here the convention
$\sum_{i=1}^0 \cdot = \sum_{i=K+1}^K \cdot = 0$.}
\begin{align}
\ests{R}(\lambda,\beta,\aaa)&=\frac{1}{K}\Big(\sum_{i=1}^{i_0-1}\rho_{\lambda,\beta,\aaa}(\kk_i)+\sum_{i=i_0}^{K}
\rho_{\lambda,\beta,\aaa}(\kk_i)\Big) \\
&=\frac{1}{K}\Big(\lambda^2 \sum_{i=1}^{i_0-1}\alpha_2(\kk_i)
+\lambda \sum_{i=1}^{i_0-1}\alpha_1(\kk_i) +
\sum_{i=1}^{i_0-1}\alpha_0(\kk_i) + \sum_{i=i_0}^{K}
r^2(\kk_i)-(K+1-i_0)\sigma^2 \Big). \label{eq:riskest}
\end{align}
In other words, $\ests{R}(\lambda,\beta,\aaa)$ is a piecewise second-order polynomial function  of $\lambda$.
Assume now that $i_0 \in \{2,\ldots,K\}$. For given values of $\beta$ and $\aaa$, the minimum over $\RR$
of the polynomial in \eqref{eq:riskest} is reached at
\begin{align}
\tilde{\lambda}_{i_0}(\beta,\aaa)=-\frac{\sum_{i=1}^{i_0-1}\alpha_1(\kk_i)}{2\sum_{i=1}^{i_0-1}\alpha_2(\kk_i)}.
\end{align}
The minimum over $\big[\|\bbf{r}(\kk_{i_0})\|_{\LLa}^\beta,\|\bbf{r}(\kk_{i_0-1})\|_{\LLa}^\beta\big]$
of the estimated risk is therefore given by
\begin{equation}
\lambda^*_{i_0}(\beta,\aaa) =
\begin{cases}
\tilde{\lambda}_{i_0}(\beta,\aaa) & \mbox{if $[\|\bbf{r}(\kk_{i_0})\|_{\LLa}^\beta \le \tilde{\lambda}_{i_0}(\beta,\aaa) \le \|\bbf{r}(\kk_{i_0-1})\|_{\LLa}^\beta$}\\
\|\bbf{r}(\kk_{i_0})\|_{\LLa}^\beta &
\mbox{if $\tilde{\lambda}_{i_0}(\beta,\aaa) < \|\bbf{r}(\kk_{i_0})\|_{\LLa}^\beta$}\\
\|\bbf{r}(\kk_{i_0-1})\|_{\LLa}^\beta &
\mbox{if $\tilde{\lambda}_{i_0}(\beta,\aaa) >\|\bbf{r}(\kk_{i_0-1})\|_{\LLa}^\beta$}.
\end{cases}
\end{equation}
The minimizers $\lambda^*_{1}(\beta,\aaa)$ and $\lambda^*_{K+1}(\beta,\aaa)$ of the
estimated risk over $I_1$ and $I_{K+1}$ can be found in a similar way.
The global minimizer $\lambda^*(\beta,\aaa)$ of the estimated risk is subsequently computed as
\begin{equation}
\lambda^*(\beta,\aaa) = \arg\min_{
(\lambda^*_{i_0}(\beta,\aaa))_{1\le i_0 \le K+1}}
\ests{R}(\lambda^*_{i_0}(\beta,\aaa),\beta,\aaa).
\end{equation}

To determine the optimal value $\beta^*(\aaa)$ of the exponent
$\beta$, we can then proceed to an exhaustive search over a
set $\mathcal{V}$ of possible values for this parameter by
choosing
\begin{equation}
\beta^*(\aaa)=\arg\min_{\beta\in \mathcal{V}}
\ests{R}(\lambda^*(\beta,\aaa),\beta,\aaa).
\end{equation}
In our experiments, it was observed that a restricted set of a few 
search values is sufficient to get good results.

\subsection{Iterative optimization algorithm}
The optimal expression of the vector $\aaa$ is
derived in a closed form in Section
\ref{sec:estim_q} as a function of the parameters
$\lambda$ and $\beta$. In this way, the optimization problem simply 
reduces to the determination of the latter two parameters. On the other hand, given $\aaa$,
a procedure for determining the optimal values of $\lambda$ and
$\beta$ is described in Section \ref{sec:estim_beta_lambda}. 
In order to get optimized
values of the estimator parameters, we therefore propose to apply the
following iterative optimization approach:
\begin{enumerate}
\item Initialization: Fix %
$\mathbf{P}$ and $\mathcal{V}$. Set the iteration number $p=1$ and
$\aaa^{(0)} = [1,0,\ldots,0]^\top \in \RR^{d'}$
\item\label{itern} Iteration $p$
\begin{enumerate}
\item Set $\beta^{(p)} = \beta^*(\aaa^{(p-1)})$ and
$\lambda^{(p)} = \lambda^*(\beta^{(p)},\aaa^{(p-1)})$
as described in Section \ref{sec:estim_beta_lambda}.
\item Set $\aaa^{(p)} = \aaa^*(\lambda^{(p)},\beta^{(p)})$
using \eqref{eq:opta} where the expectations are replaced by sample estimates.
\end{enumerate}
\item Set $p\leftarrow p+1$ and goto step \ref{itern}
until convergence.
\item Return the  optimized values $(\lambda^{(p)},\beta^{(p)},\aaa^{(p)})$ of the parameters.
\end{enumerate}
We point out that, although we were not able to prove the convergence of the optimized parameters, the generated sequence 
$(\ests{R}(\lambda^{(p)},\beta^{(p)},\aaa^{(p)}))_p$ is a
decreasing convergent sequence. This means that improved
parameters are generated at each iteration of the algorithm.

\section{Multicomponent  wavelet denoising}
\label{sec:Multi_wav_den} Our objective here is to apply the
nonlinear estimator developed in the previous section  to noise
reduction in degraded multicomponent images by considering
wavelet-based approaches.
The original multichannel image is composed of $B \in
{\mathbb{N}}^*$ components $s^{(b)}$ of size $L \times L$, with
$b\in \{1, \ldots, B\}$. Each image component $s^{(b)}$ is
corrupted by an additive noise $n^{(b)}$, which is assumed
independent of the images of interest. Consequently, we obtain the
following noisy observation  field $r^{(b)}$ defined by:
\begin{equation}
{\forall \mathbf{k}} \in \KK,\qquad r^{(b)}({\mathbf{k}})= s^{(b)}({\mathbf{k}})+ n^{(b)}({\mathbf{k}}),
\label{eq:observ_band}
\end{equation}
where $\KK=\{1,\ldots,L\}^2$.
Following a multivariate approach, we define:
\begin{equation}
{\forall \mathbf{k}} \in \KK,\qquad
\left\{
\begin{array}{lll}
{\mathbf{s}}({\mathbf{k}})&\eqdef&[s^{(1)}({\mathbf{k}}), \ldots,s^{(B)}({\mathbf{k}})]^\top\\
{\mathbf{n}}({\mathbf{k}})&\eqdef&[n^{(1)}({\mathbf{k}}), \ldots,n^{(B)}({\mathbf{k}})]^\top\\
{\mathbf{r}}({\mathbf{k}})&\eqdef&[(r^{(1)}({\mathbf{k}}),
\ldots,r^{(B)}({\mathbf{k}})]^\top
\end{array}
\right..
\end{equation}
Obviously, the observation model (\ref{eq:observ_band}) can be
rewritten as ${\forall \mathbf{k}} \in\KK,\quad
\mathbf{r}({\mathbf{k}})= \mathbf{s}({\mathbf{k}})+
\mathbf{n}({\mathbf{k}}).$ In many optical systems, the noise
stems from a combination of photonic and electronic noises
cumulated with quantization errors. Subsequently, we will assume
that the noise vector process ${\mathbf{n}}$ is zero-mean iid
Gaussian with covariance matrix
${\GGamma}^{(\mathbf{n})}$. In
\cite{Abrams_M_2002_sam_rirpgmrss} and
\cite{Corner_B_2003_ijrs_noi_ersidm}, this was shown to constitute
a realistic assumption for satellite systems. It is worth noticing
that a non diagonal matrix ${\GGamma}^{(\mathbf{n})}$
indicates that inter-component correlations exist between
co-located noise samples.

Hereafter, we will use two decompositions. The first one consists
in a critically decimated $M$-band wavelet transform whereas the
second one, corresponds to an $M$-band dual-tree wavelet
decomposition we recently proposed
\cite{Chaux_C_2006_tip_ima_adtmbwt} which permits a
directional analysis of images.

\subsection{$M$-band wavelet basis estimation}

\subsubsection{Model}\label{se:modelwav}
We first consider an $M$-band orthonormal discrete wavelet
transform (DWT) \cite{Steffen_P_1993_tsp_the_rmbwb} over $J$
resolution levels applied, for each channel $b$, to the
observation field $r^{(b)}$. This decomposition produces $M^2-1$
wavelet subband sequences $r_{j,\mathbf{m}}^{(b)}$, $\mathbf{m}
\in \NMSQ \setminus \{(0,0)\}$, each of size $L_j \times L_j$
(where $L_j\,=\,L/M^j$)\footnote{For simplicity, $L$ is assumed to
be divisible by $M^J$.}, at every resolution level $j$ and an
additional approximation sequence $r_{J,\mathbf{0}}^{(b)}$ of size
$L_J\times L_J$, at resolution level $J$. %

\begin{figure}[ht]
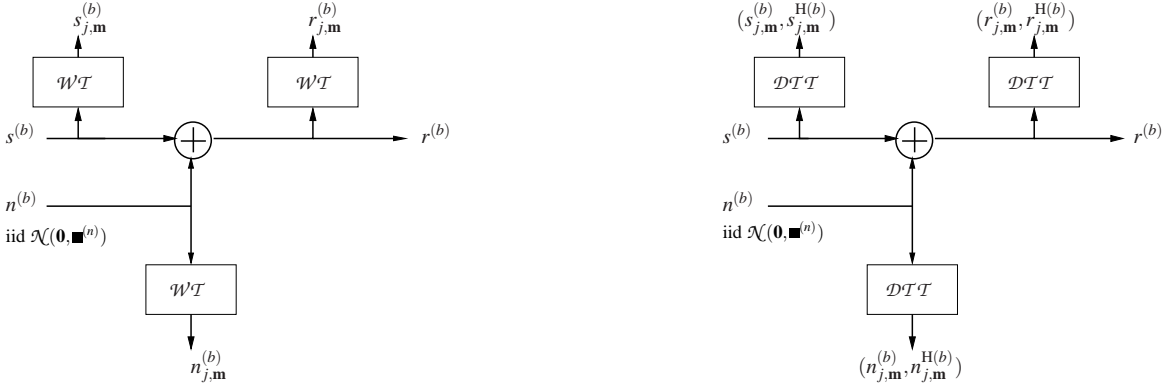

\begin{center}
\begin{tabular}{lcr}
\input{model_wavelets.pstex_t} & \hspace{2cm} & \input{model_dualtree.pstex_t}\\
\end{tabular}
\caption{Considered models in the wavelet transform domain (left)
and in the dual-tree transform domain (right). \label{fig:model_transf}}
\end{center}
\vspace*{-0.8cm} \end{figure}

On the one hand, the linearity of the DWT yields (see. Fig.
\ref{fig:model_transf}): ${\forall \mathbf{k}} \in \KK_j,\quad
{\mathbf{r}}_{j,\mathbf{m}}({\mathbf{k}})={\mathbf{s}}_{j,\mathbf{m}}({\mathbf{k}})+{\mathbf{n}}_{j,\mathbf{m}}({\mathbf{k}})$
where $\KK_j = \{1,\ldots,L_j\}^2$ and
\begin{align*}
&{\mathbf{s}}_{j,\mathbf{m}}({\mathbf{k}})\,\eqdef\,[s_{j,\mathbf{m}}^{(1)}({\mathbf{k}}),\ldots,s_{j,\mathbf{m}}^{(B)}({\mathbf{k}})]^\top,\\
&{\mathbf{r}}_{j,\mathbf{m}}({\mathbf{k}})\,\eqdef\,[r_{j,\mathbf{m}}^{(1)}({\mathbf{k}}),\ldots,r_{j,\mathbf{m}}^{(B)}({\mathbf{k}})]^\top,\\
&{\mathbf{n}}_{j,\mathbf{m}}({\mathbf{k}})\,\eqdef\,[n_{j,\mathbf{m}}^{(1)}({\mathbf{k}}),\ldots,n_{j,\mathbf{m}}^{(B)}({\mathbf{k}})]^\top.
\end{align*}
On the other hand, the orthonormality of the DWT preserves the
spatial whiteness of ${\mathbf{n}}_{j,\mathbf{m}}$. More
specifically, it is easily shown that the latter field is an
i.i.d.
${\mathcal{N}}({\mathbf{0}},{\GGamma}^{(\mathbf{n})})$ random vector process.\\
A final required assumption is that the random vectors
$({\mathbf{s}}_{j,\mathbf{m}}({\mathbf{k}}))_{k\in\KK}$ are
identically distributed for any given value of $(j,\mathbf{m})$.

\subsubsection{Associated estimator}
As described in Section \ref{sec:prop_estim}, our estimator can be
directly applied to the $M$-band DWT coefficients.
As in conventional approaches, the approximation
coefficients (\emph{i.e.} $j=J$ and $\mm=(0,0)$) are kept
untouched. The parameters $\lambda_{j,\mathbf{m}}$,
$\beta_{j,\mathbf{m}}$ and $\mathbf{q}_{j,\mathbf{m}}$ can be
determined adaptively, for every subband $(j,\mathbf{m})$ and
every component $b$. In this case, the ROV can be scalar, spatial,
inter-component or combined spatial/inter-component. More detailed examples will be given in Section \ref{sec:simuls}.

\subsection{$M$-band dual-tree wavelet frame estimation}

\subsubsection{A brief overview of the decomposition}
\begin{figure}[ht]
\begin{center}
\input{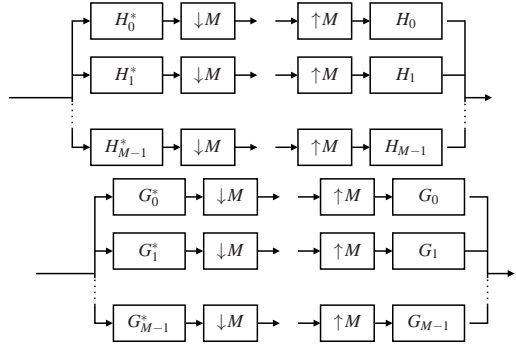}
\caption{Pair of analysis/synthesis $M$-band para-unitary filter
banks. \label{fbM}}
\end{center}
\vspace*{-0.8cm} \end{figure}

\begin{figure}[ht]
\begin{center}
\input{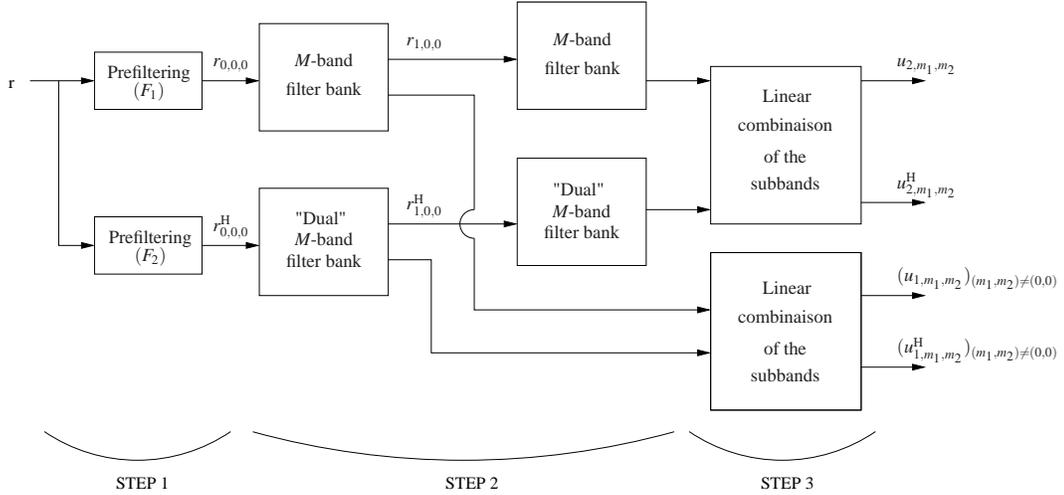}
\caption{Dual-tree $2D$. \label{DTT2D}}
\end{center}
\vspace*{-0.8cm} \end{figure}

The $M$-band real dual-tree transform (DTT) consists in performing
two separable $M$-band orthonormal wavelet decompositions in
parallel as illustrated by Fig. \ref{fbM}. The one-dimensional
wavelets $(\psi_m)_{m \in \NMs}$ corresponding to the primal tree
(upper branch) are assumed known and  the ``dual tree''
ones $(\psi^\HH_m)_{m\in \NMs}$ (used in the lower branch) are
built so that they define  Hilbert pairs with the primal ones.
This reads in the frequency domain: $\forall m \in \NMs,\;
\widehat{\psi}^\HH_m(\omega) = -\imath \, \textrm{sign}(\omega)
\widehat{\psi}_m(\omega)$. Details of construction are given in
\cite{Chaux_C_2006_tip_ima_adtmbwt} and the global scheme of the
decomposition is shown in Fig.~\ref{DTT2D}. An important point is
that the dual-tree decomposition includes a post-processing,
consisting of a linear isometric combination of the primal/dual
subbands (see Fig. \ref{DTT2D}). This post-processing constitutes
an essential step for obtaining a directional analysis.
Finally, two sets of coefficients (primal and dual ones) are
obtained, %
which means that this representation involves a limited redundancy
of a factor two.

\subsubsection{Model}
Applying this decomposition to a multichannel image having $B$ components
and using similar notations to Section~\ref{se:modelwav},
we obtain the following coefficients for the original data,
the observed ones and the noise, respectively:
\begin{itemize}
\item before post-processing:
$({\mathbf{s}}_{j,\mathbf{m}}({\mathbf{k}}),{
\mathbf{s}}^\HH_{j,\mathbf{m}}({\mathbf{k}})),
({\mathbf{r}}_{j,\mathbf{m}}({\mathbf{k}}),{\mathbf{r}}^\HH_{j,\mathbf{m}}({\mathbf{k}})),
({\mathbf{n}}_{j,\mathbf{m}}({\mathbf{k}}),{\mathbf{n}}^\HH_{j,\mathbf{m}}({\mathbf{k}}))$;
\item after post-processing:
$({\mathbf{v}}_{j,\mathbf{m}}({\mathbf{k}}),{\mathbf{v}}^\HH_{j,\mathbf{m}}({\mathbf{k}})),
({\mathbf{u}}_{j,\mathbf{m}}({\mathbf{k}}),{\mathbf{u}}^\HH_{j,\mathbf{m}}({\mathbf{k}})),
({\mathbf{w}}_{j,\mathbf{m}}({\mathbf{k}}),{\mathbf{w}}^\HH_{j,\mathbf{m}}({\mathbf{k}}))$.
\end{itemize}
Note that a post-processing is not applied to all subbands (see
\cite{Chaux_C_2006_tip_ima_adtmbwt}) as the Hilbert condition is only
verified by mother wavelets. As a consequence, the linear
isometric combination is not performed for subbands processed by
low pass filters. More precisely, the post-processing consists of
the following unitary transform of the detail coefficients: for
all $\mm \in \NMsSQ$,
\begin{equation}
\forall \kk \in \KK_j,\qquad \mathbf{w}_{j,\mm}(\kk) =
\frac{1}{\sqrt{2}}\big(\mathbf{n}_{j,\mm}(\kk)+\mathbf{n}_{j,\mm}^\HH(\kk)\big)
\quad \mbox{and}
\quad \mathbf{w}_{j,\mm}^\HH(\kk) = \frac{1}{\sqrt{2}}
\big(\mathbf{n}_{j,\mm}(\kk)-\mathbf{n}_{j,\mm}^\HH(\kk)\big).
\label{eq:transfnoise2}
\end{equation}
Similar relations hold for the original and observed data.
Furthermore, invoking the linearity property of the transform,
these coefficients are related by (see. Fig.
\ref{fig:model_transf} (right)):
\begin{align}
{\forall \mathbf{k}} \in \KK_j,\quad
&{\mathbf{r}}_{j,\mathbf{m}}({\mathbf{k}})={\mathbf{s}}_{j,\mathbf{m}}({\mathbf{k}})+{\mathbf{n}}_{j,\mathbf{m}}({\mathbf{k}}) \quad \mbox{and} \quad {\mathbf{u}}_{j,\mathbf{m}}({\mathbf{k}})={\mathbf{v}}_{j,\mathbf{m}}({\mathbf{k}})+{\mathbf{w}}_{j,\mathbf{m}}({\mathbf{k}}) \nonumber\\
&{\mathbf{r}}^\HH_{j,\mathbf{m}}({\mathbf{k}})={\mathbf{s}}^\HH_{j,\mathbf{m}}({\mathbf{k}})+{\mathbf{n}}^\HH_{j,\mathbf{m}}({\mathbf{k}})
\quad \quad \quad \; \;
{\mathbf{u}}^\HH_{j,\mathbf{m}}({\mathbf{k}})={\mathbf{v}}^\HH_{j,\mathbf{m}}({\mathbf{k}})+{\mathbf{w}}^\HH_{j,\mathbf{m}}({\mathbf{k}}).
\end{align}

\subsubsection{Noise statistical properties}
In our recent work
\cite{Chaux_C_2005_gretsi_etu_bambad,Chaux_C_tit_xxxx_tit_noi_cpdtwd},
a detailed analysis of the noise statistical properties after such
a dual tree decomposition has been performed. In the sequel, some
of the main results we obtained are briefly summarized. Let us
recall the definition of the deterministic cross-correlation
function between the primal and dual wavelets: for all $(m,m') \in
\NN_M^2$,
\begin{equation}
\forall \tau \in \RR,\qquad \gamma_{m,m'}(\tau)   =  \int_{-\infty}^\infty
\psi_{m}(x)\psi_{m'}^\HH(x-\tau)\;dx. \label{eq:intnoise}
\end{equation}
We have obtained the following
expressions for the covariance fields: for all $j\in\ZZ$, $\mm = (m_1,m_2) \in \NMSQ$,\linebreak $\mm' = (m_1',m_2') \in \NMSQ$, $\kk = (k_1,k_2)\in \KK_j$ and $\kk'=(k_1',k_2')\in \KK_j$,
\begin{gather*}
   \left.\begin{aligned}
      \E[\mathbf{n}_{j,\mm}(\kk)(\mathbf{n}_{j,\mm'}(\kk'))^\top]   \\
\E[\mathbf{n}_{j,\mm}^\HH(\kk)
(\mathbf{n}_{j,\mm'}^\HH(\kk'))^\top]
\end{aligned} \right\}
= {\GGamma}^{(\mathbf{n})} \delta_{m_1-m_1'} \delta_{m_2-m_2'} \delta_{k_1-k_1'} \delta_{k_2-k_2'} \\
\E[\mathbf{n}_{j,\mm}(\kk)(\mathbf{n}_{j,\mm'}^\HH(\kk'))^\top] =
{\GGamma}^{(\mathbf{n})}\gamma_{m_1,m_1'}(k_1'-k_1)
\gamma_{m_2,m_2'}(k_2'-k_2)\,.
\end{gather*}
It can be further noticed that, for $\mm\neq {\mathbf{0}}$, the random
vectors
$\mathbf{n}_{j,\mm}(\kk)$ and $\mathbf{n}_{j,\mm}^\HH(\kk)$
at a given location $\kk$
are mutually uncorrelated. \\
After post-processing, the covariances of the transformed noise
coefficient fields  can be easily deduced from %
\eqref{eq:transfnoise2}: for all $(\mm,\mm') \in \NMsSQ$ and
$(\kk,\kk')\in\KK_j^2$,
\begin{align}
\E[\mathbf{w}_{j,\mm}(\kk)(\mathbf{w}_{j,\mm'}(\kk'))^\top]=&
\E[\mathbf{n}_{j,\mm}(\kk)(\mathbf{n}_{j,\mm'}(\kk'))^\top]+
\E[\mathbf{n}_{j,\mm}(\kk)(\mathbf{n}_{j,\mm'}^{\HH}(\kk'))^\top]
\label{eq:2Drot1}\\
\E[\mathbf{w}_{j,\mm}^{\HH}(\kk)(\mathbf{w}_{j,\mm'}^{\HH}(\kk'))^\top]=&
\E[\mathbf{n}_{j,\mm}(\kk)(\mathbf{n}_{j,\mm'}(\kk'))^\top]-
\E\{\mathbf{n}_{j,\mm}(\kk)(\mathbf{n}_{j,\mm'}^{\HH}(\kk'))^\top]
\label{eq:2Drot2}\\
\E[\mathbf{w}_{j,\mm}(\kk)(\mathbf{w}_{j,\mm'}^{\HH}(\kk'))^\top]=&\mathbf{0}.
\label{eq:2Drot3}
\end{align}
In summary, noise coefficients are inter-tree correlated before
the post-transform whereas after the post-transform, they are
spatially correlated. This constitutes an important consequence of
the post-processing stage.

\subsubsection{Associated estimator}
In the $M$-band DTT case, the primal and dual coefficients are both estimated. For each
component $b \in  \{1,\ldots,B\}$, the estimator reads: for the subbands which are not linearly combined ($\mathbf{m}\not\in\NMs$),
\begin{align}
&\estw{s}{(b)}_{j,\mathbf{m}}(\kk) =
\eta_{\lambda_{j,\mathbf{m}}^{(b)}}(\|\bbf{r}_{j,\mathbf{m}}^{(b)}(\kk)\|_{\LLa}^{\beta_{j,\mathbf{m}}^{(b)}})
\;
(\mathbf{q}_{j,\mathbf{m}}^{(b)})^{\top}\bbf{r}_{j,\mathbf{m}}^{(b)}(\kk)\\
&\estwl{s}{\HH(b)}_{j,\mathbf{m}}(\kk) =
\eta_{\lambda_{j,\mathbf{m}}^{\HH(b)}}(\|(\bbf{r}_{j,\mathbf{m}}^{\HH(b)}(\kk))\|_{\LLa}^{\beta_{j,\mathbf{m}}^{\HH(b)}})
\;
\big(\mathbf{q}_{j,\mathbf{m}}^{\HH(b)}\big)^{\top}\bbf{r}_{j,\mathbf{m}}^{\HH(b)}(\kk), \label{estimatorDTT}
\end{align}
and, for the combined subbands ($\mathbf{m}\in\NMs$),
\begin{align}
&\estw{v}{(b)}_{j,\mathbf{m}}(\kk) =
\eta_{\lambda_{j,\mathbf{m}}^{(b)}}(\|\bbf{u}_{j,\mathbf{m}}^{(b)}(\kk)\|_{\LLa}^{\beta_{j,\mathbf{m}}^{(b)}})
\;
(\mathbf{q}_{j,\mathbf{m}}^{(b)})^{\top}\bbf{u}_{j,\mathbf{m}}^{(b)}(\kk)\\
&\estwl{v}{\HH(b)}_{j,\mathbf{m}}(\kk) =
\eta_{\lambda_{j,\mathbf{m}}^{\HH(b)}}(\|(\bbf{u}_{j,\mathbf{m}}^{\HH(b)}(\kk))\|_{\LLa}^{\beta_{j,\mathbf{m}}^{\HH(b)}})
\;
\big(\mathbf{q}_{j,\mathbf{m}}^{\HH(b)}\big)^{\top}\bbf{u}_{j,\mathbf{m}}^{\HH(b)}(\kk),
\label{estimatorDTTH}
\end{align}
where $\bbf{r}_{j,\mathbf{m}}^{\HH(b)}(\kk)$ and
$\bbf{r}_{j,\mathbf{m}}^{\HH(b)}(\kk)$ (resp.
$\bbf{u}_{j,\mathbf{m}}^{(b)}(\kk)$ and
$\bbf{u}_{j,\mathbf{m}}^{\HH(b)}(\kk)$) are the ROVs for the
primal and dual coefficients before (resp. after)
post-transformation. Similarly to the DWT case,
$(\lambda_{j,\mathbf{m}},\beta_{j,\mathbf{m}},\mathbf{q}_{j,\mathbf{m}})$
and
$(\lambda_{j,\mathbf{m}}^\HH,\beta_{j,\mathbf{m}}^\HH,\mathbf{q}_{j,\mathbf{m}}^\HH)$
can be adaptively determined by minimizing the quadratic risk over
the frame coefficients for every subband $(j,\mathbf{m})$ and
every component $b$ in each tree. Furthermore, the approximation
coefficients are also kept untouched. The denoised multichannel
images are then obtained from the estimated wavelet coefficients
by inverting the DTT using the optimal reconstruction developed in
\cite{Chaux_C_2006_tip_ima_adtmbwt}. In this case, a great
flexibility exists in the choice of the ROV since the latter can
be scalar, spatial, inter-component, inter-tree or combined
spatial/inter-component/inter-tree as will be illustrated in the
next section.

\section{Numerical results}
\label{sec:simuls} We now provide numerical examples showing the
efficiency of the proposed method. In our simulations, we consider
different multichannel remote sensing images. For the sake of
clarity, we only provide experimental results concerning two
multispectral images. The first one designated as Tunis
corresponds  to a part of a SPOT3 scene depicting a urban area of
the city of Tunis ($B=3$). The second one named Trento is a
Landsat Thematic Mapper image having initially seven channels. The
thermal component (the sixth component) has been discarded since
it is not similar to the remaining ones. Hence, the test image
Trento  is a $B=6$ component image.  In order to obtain reliable
results from a statistical viewpoint, Monte Carlo simulations have
been conducted. According to our experiments, averaging the mean
square error over five noise realizations is sufficient to obtain
consistent quantitative evaluations.

In the following, we discuss several topics: in particular, we
compare our method with other recently proposed estimators,
possibly having a multivariate structure. Then, we consider
different pre-processings that can be performed on the
multichannel data before applying the estimator, thus expecting
improved results. The ROV being defined in a generic way in the
previous section, we also study the influence of specific choices
of this ROV on the denoising performance as well as the influence
of the wavelet choice (considering various $M$-band filter banks).
When different decompositions are performed, we set the maximum
decomposition level so that the size of the approximation fields
remain the same. Consequently, we decompose the images over $2$
levels for a $4$-band filter bank structure and $4$ levels for a
dyadic one.

If $\sigma^{(b)}$
denotes the standard deviation of the clean multichannel component
$s^{(b)}$ (of size $L_1 \times L_2$) we define the initial and the final signal to noise
ratios $\textrm{SNR}_\textrm{initial}^{(b)}$ and,
$\textrm{SNR}^{(b)}_\textrm{final}$ in the $b$-th channel as:
\begin{equation}
\textrm{SNR}^{(b)}_\textrm{initial}\,\eqdef\,10\log_{10}\left(\frac{(\sigma^{(b)})^2L_1 L_2}{\|s^{(b)}-r^{(b)}\|^2}\right),
\quad \mbox{and} \quad
\textrm{SNR}^{(b)}_\textrm{final}\,\eqdef\,10\log_{10}\left(\frac{(\sigma^{(b)})^2L_1 L_2}{\|s^{(b)}-\hat{s}^{(b)}\|^2}\right).
\end{equation}
Then, all the $B$ channel contributions are averaged into
global values of the initial and final signal to noise ratio
$\textrm{SNR}_\textrm{initial}$ and,
$\textrm{SNR}_\textrm{final}$.

\subsection{Comparison with existing methods}
\label{se:compare}
We aim in this section at comparing the proposed approach
with several
existing denoising methods which
are briefly described in Table \ref{tab:exis_method}.
\begin{table}
\begin{center}
\caption{Brief description of the tested
methods.\label{tab:exis_method}}
\begin{tabular}{|l|l|l||l|l|l|}
\hline
Acronym & Description & Ref. &  Acronym & Description & Ref.\\
\hline\hline
Biv. & Bivariate shrinkage method & \cite{Sendur_L_2002_spl_biv_slve} & \multicolumn{3}{c|}{Multivariate methods}\\
\hline
BLS-GSM & Bayesian Least Squares (BLS)  &  \cite{Portilla_J_2003_tip_ima_dsmgwd} & ProbShrink  & Multivariate method for $3$-band images using & \cite{Pizurica_A_2006_tip_est_ppsimsmid}\\
 & Gaussian Scale Mixture (GSM) &  & ($ . \times . $) &  critically decimated DWT  and taking into & \\
 & using critically decimated DWT & & & account a ($ . \times . $) neighborhood in each channel & \\
\hline
BLS-GSM & BLS-GSM using critically  &  \cite{Portilla_J_2003_tip_ima_dsmgwd} & ProbShrink  & Multivariate method for $3$-band images & \cite{Pizurica_A_2006_tip_est_ppsimsmid}\\
 + parent & decimated DWT and taking into  &  & red. ($ . \times . $) & using undecimated DWT and taking into & \\
 & account the parent coefficient &  & & account a ($ . \times . $) neighborhood in each channel & \\
\hline
BLS-GSM& BLS-GSM using a full&  \cite{Portilla_J_2003_tip_ima_dsmgwd} & Surevect & Estimator based on an extended SURE & \cite{Benazza05}\\
  red. &  steerable pyramid & & & approach using a critically decimated DWT & \\
&  (redundant transform) &  & & & \\
\hline
Curvelets & Block estimator using curvelet & \cite{Candes_E_2006_siam-mms_fas_dct}\\
 & transform: $7.5$ times redundant & \\
\cline{1-3}
\end{tabular}
\end{center}\vspace*{-0.8cm}
\end{table}
Tests are
performed on a $512 \times 512$ SPOT$3$ image of Tunis city
($B=3$) (as some multivariate methods are limited to
$3$-band images) corrupted by an additive zero-mean white Gaussian noise with
covariance matrix $\GGamma^{(\mathbf{n})}_1=\sigma^2\;
\mathbf{I}_B$, where $\mathbf{I}_B$ denotes the
identity matrix of size $B \times B$.

We first study techniques that use orthogonal wavelet transforms. We employ Daubechies wavelets of order $4$
in all the following estimators:
\begin{enumerate}
\item the Bivariate shrinkage, which takes into account
inter-scale dependencies, the last level being processed by
inverting children and parent role
\cite{Sendur_L_2002_spl_biv_slve}; \item the BLS-GSM method
developed in \cite{Portilla_J_2003_tip_ima_dsmgwd} including or
not the parent neighborhood and considering a $3 \times 3$ spatial
neighborhood;\footnote{We use the toolbox available from
Portilla's website
\texttt{http://www.io.csic.es/PagsPers/JPortilla/}.} \item the
ProbShrink estimator \cite{Pizurica_A_2006_tip_est_ppsimsmid} for
multivariate data with a $3\times 3$ spatial neighborhood (in each
channel);\footnote{We use the toolbox available from
Pi\u{z}urica's website
\texttt{http://telin.rug.ac.be/$\sim$sanja/}.} 
\item the Surevect
estimator \cite{Benazza05}, which only takes into account
multicomponent statistical dependencies;
\item the proposed
estimator where the set of values taken by
$\beta_{j,\mathbf{m}}^{(b)}$ is $\mathcal{V} = \{0.5,1,1.5,2\}$,
the ROV is represented in Fig. \ref{fig:ROVDWT}(b).
A subspace constraint is added on the vector
$\mathbf{q}_{j,\mathbf{m}}^{(b)}$ so that
$(\mathbf{q}_{j,\mathbf{m}}^{(b)})^\top
\bbf{r}_{j,\mathbf{m}}^{(b)}(\kk)$ reduces to a linear combination
of the multichannel data at the considered location and the 4 spatial nearest neighbors.
\end{enumerate}
\begin{table}[htbp]
\begin{center}
\caption{Denoising results (average values computed over $3$
channels) on Tunis image using non redundant orthogonal transforms
(see Tab. \ref{tab:exis_method}) with Daubechies wavelets of order
$4$ (length $8$).\label{tab:compDWT}}
\begin{tabular}{| c | c || c | c | c | c | c | c |}
\hline
$\sigma^2$ & $\textrm{SNR}_{\textrm{init}}$ & Biv & Probshrink  & BLS-GSM & BLS-GSM   & Surevect & Proposed\\
&  & & ($3\times3$) & & + parent & DWT & DWT\\ \hline \hline
650.3 & 5.081 & 11.85  & 11.86& 12.05 & 12.14 & 13.08& \textbf{13.42}\\ %
\hline %
410.3 & 7.081& 12.89 & 12.84&13.11 & 13.21  & 14.12 & \textbf{14.52}\\ %
\hline %
258.9 & 9.081 & 13.99& 13.91& 14.26 & 14.36   & 15.24 & \textbf{15.70}\\ %
\hline %
163.3 & 11.08& 15.19 & 15.08& 15.49& 15.60   & 16.43 & \textbf{16.96}\\ %
\hline %
103.1 & 13.08 & 16.49 & 16.37& 16.81& 16.93   & 17.70& \textbf{18.28}\\ %
\hline %
65.03 & 15.08& 17.88& 17.54&18.22 &  18.35   & 19.04 & \textbf{19.65}\\ %
\hline
\end{tabular}
\end{center}
\vspace*{-0.8cm}
\end{table}
The obtained results are provided in Table \ref{tab:compDWT} (the
initial SNRs may be different in each channel although  the noise
variance is fixed). 
For the first three methods, denoising has been performed for each
component of the multichannel data.
For orthogonal wavelets, ProbShrink leads to
better results when it is associated to a spatial neighborhood
than when considering only the pixel value to be estimated.
It performs quite similarly to the Bivariate shrinkage. The
BLS-GSM estimator outperforms these two methods providing a gain
of approximatively $0.2$ dB (up to $0.3$ dB by including the
parent coefficient in the neighborhood). Nevertheless, the
Surevect estimator brings more significant improvements and it can
be observed that our method leads to even better numerical results
whatever the initial noise level is. The new structure of the
estimator coupled with a spatial and spectral block processing may
explain such an improvement. Furthermore, the gain increases as
the initial SNR increases, which is interesting in satellite
imaging where the noise is often of low intensity.
To be fair, we would like to mention that, although Bivariate shrinkage,
Probshrink and BLS-GSM were designed for monochannel image denoising, 
extensions of these methods to the multivariate case could probably be envisaged.

In the monochannel case, it has been reported that the use of redundant transforms often
brings noticeable improvements in denoising~\cite{Candes_E_2006_siam-mms_fas_dct}. We
subsequently compare methods that have been proved to be very efficient when combined with a redundant analysis:
\begin{enumerate}
\item the curvelet denoising~\cite{Candes_E_2006_siam-mms_fas_dct}
using a curvelet frame with a redundancy approximatively equal to
$7.5$ and  a block thresholding;\footnote{We employ the
\texttt{Curvelab 2.0} toolbox which can be downloaded from
\texttt{http://www.curvelet.org}.} 
\item the BLS-GSM method using
steerable pyramids with $8$ orientations, including the parent
neighborhood and a $3 \times 3$ spatial neighborhood as described
in \cite{Portilla_J_2003_tip_ima_dsmgwd}, 
\item the ProbShrink
estimator for multivariate data using undecimated wavelet
transform \cite{Pizurica_A_2006_tip_est_ppsimsmid} (with
Daubechies wavelets of length $8$) and taking into account a
$3\times 3$ or no spatial neighborhood; 
\item the Surevect
estimator \cite{Benazza05}, extended to DTT (with Daubechies wavelets of length $8$);
\item the proposed
estimator using a DTT
where $\mathcal{V} = \{0.5,1,1.5,2\}$, the ROV is represented in Fig.
\ref{fig:ROVDTT}(b).
The vector 
$\mathbf{q}_{j,\mathbf{m}}^{(b)}$
(resp. $\mathbf{q}_{j,\mathbf{m}}^{\HH(b)}$)
is such that
it introduces a linear combination of
the multichannel data in the primal (resp. dual) tree at the considered location and the 4 spatial nearest neighbors.
\end{enumerate}
\begin{table}[htbp]
\begin{center}
\caption{Denoising results (average values computed over $3$
channels) on Tunis image using redundant transforms (see Tab.
\ref{tab:exis_method}) with Daubechies wavelets of order $4$
(length $8$). \label{tab:compTred}}
\begin{tabular}{| c | c || c | c | c | c | c | c |}
\hline
$\sigma^2$ & $\textrm{SNR}_{\textrm{init}}$ & Curvelets &  BLS-GSM red  & Probshrink red & Probshrink red  &  Surevect & Proposed\\ %
& & &+ parent & ($3 \times 3$) & ($1 \times 1$) & DTT & DTT \\
\hline \hline
650.3 & 5.081 &11.91  & 12.92 & 13.00 &13.33& 13.54 & \textbf{13.72}\\ %
\hline %
410.3 & 7.081&12.94  & 14.00 & 14.04 &14.38& 14.59 & \textbf{14.80} \\ %
\hline %
258.9 & 9.081 &14.04  & 15.15 & 15.13 &15.50& 15.70 & \textbf{15.97} \\ %
\hline %
163.3 & 11.081&15.17  & 16.38 &  16.28 &16.68& 16.87 & \textbf{17.21} \\ %
\hline %
103.1 & 13.081 &16.33  & 17.68 & 17.51 &17.92& 18.11 & \textbf{18.52} \\ %
\hline %
65.03 & 15.081&17.56  & 19.04 & 18.76 &19.20& 19.42 & \textbf{19.88} \\ %
\hline
\end{tabular}
\end{center}
\vspace*{-0.8cm}
\end{table}
It is worth pointing out that the same noisy images as used in the non redundant case have been
processed by the redundant transforms. As shown in Table
\ref{tab:compTred}, curvelets do not seem really appropriate in
this multichannel context in spite of their promising results in
the monochannel one. ProbShrink and BLS-GSM methods are
very efficient in the redundant case and ProbShrink shows its
superiority when using an inter-component neighborhood.
The methods using a DTT outperform the existing ones in all the
cases.
We point out that the DTT has a limited redundancy of a factor 2
compared with the other considered redundant decompositions. It
can be noticed that our method provides better results than Surevect.
The observed gain
increases as the initial SNR increases and we obtain
significant improvements with respect to critically decimated
transforms of about $0.25$~dB. It is also interesting to note that
the observed gain in terms of SNR leads to quite visible
differences. In Fig. \ref{fig:zoom1}, cropped versions of the
first channel of the Tunis image are displayed, for a low value of
the initial SNR ($4.66$ dB).
\begin{figure}
\begin{center}
\begin{tabular}{ccc}
\includegraphics[width=4cm]{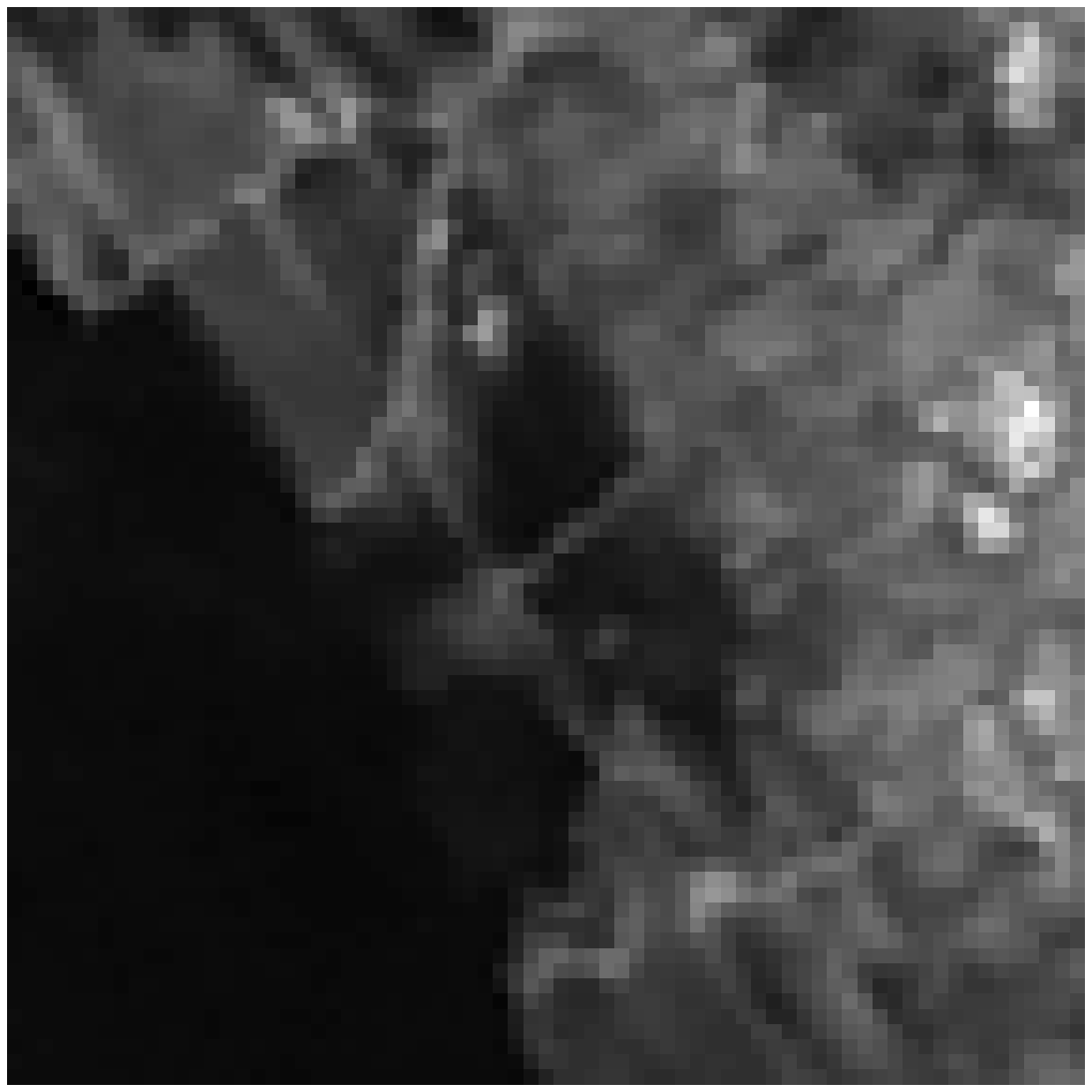} &
\includegraphics[width=4cm]{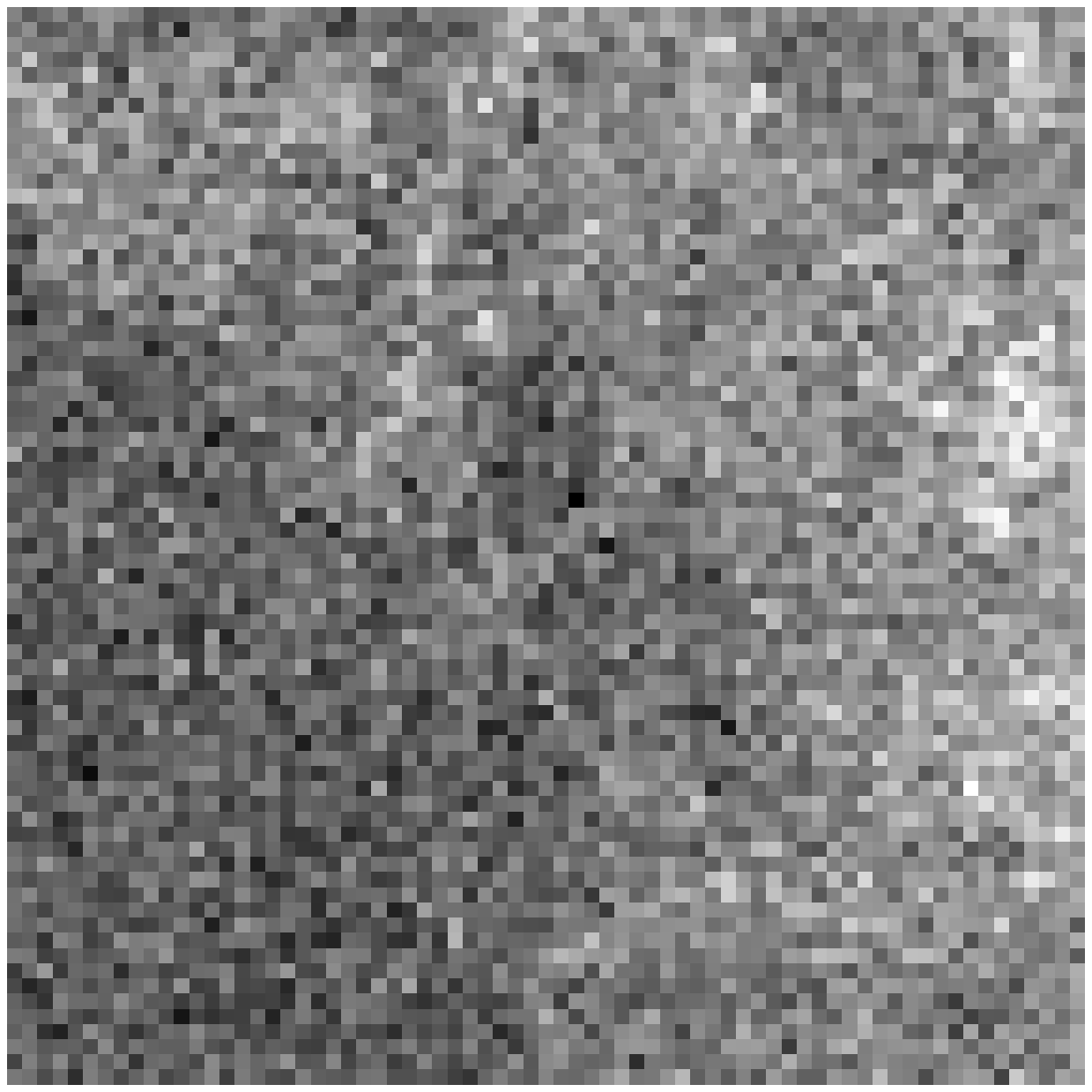} &\includegraphics[width=4cm]{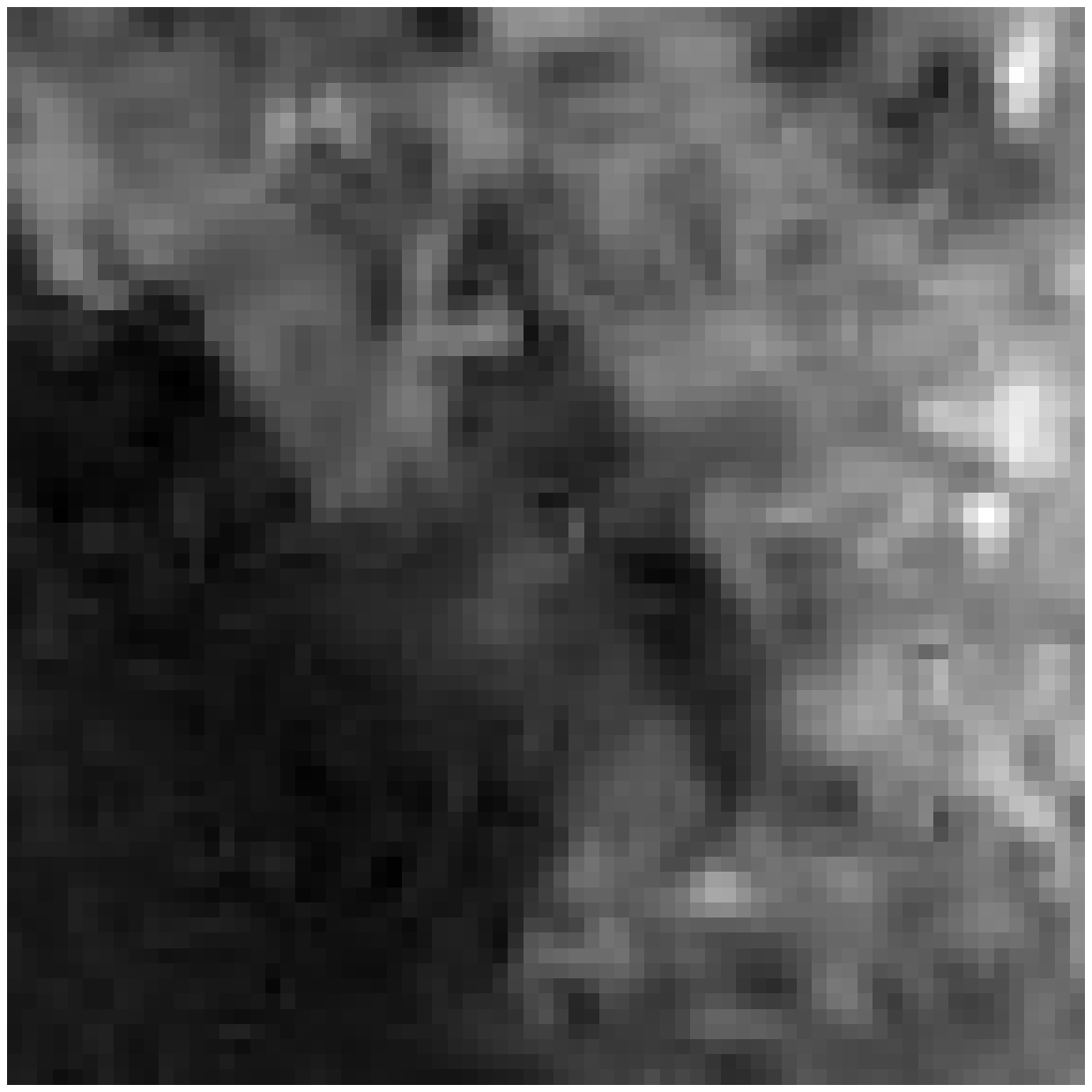}\\
(a) & (b) & (c)\\
\includegraphics[width=4cm]{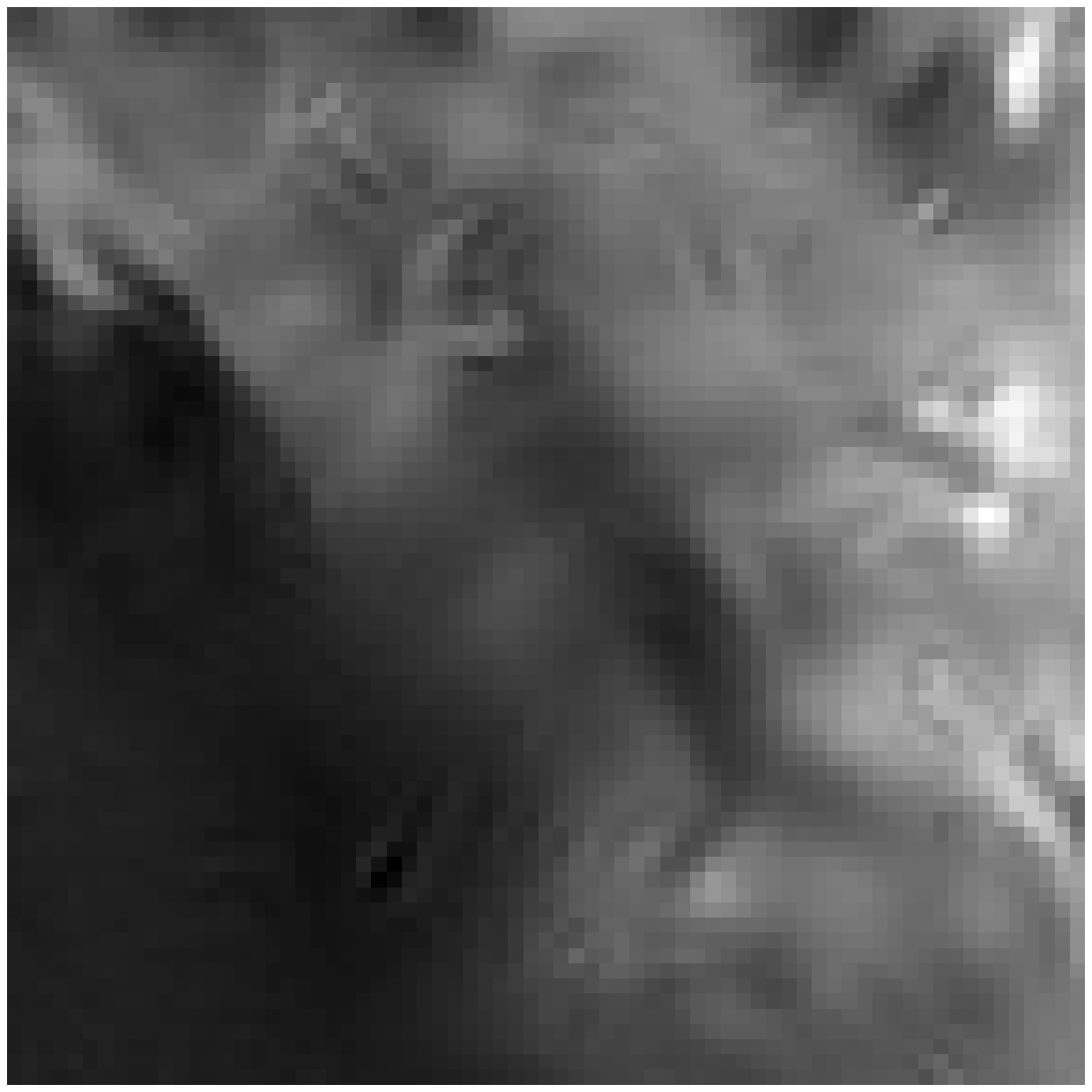} & \includegraphics[width=4cm]{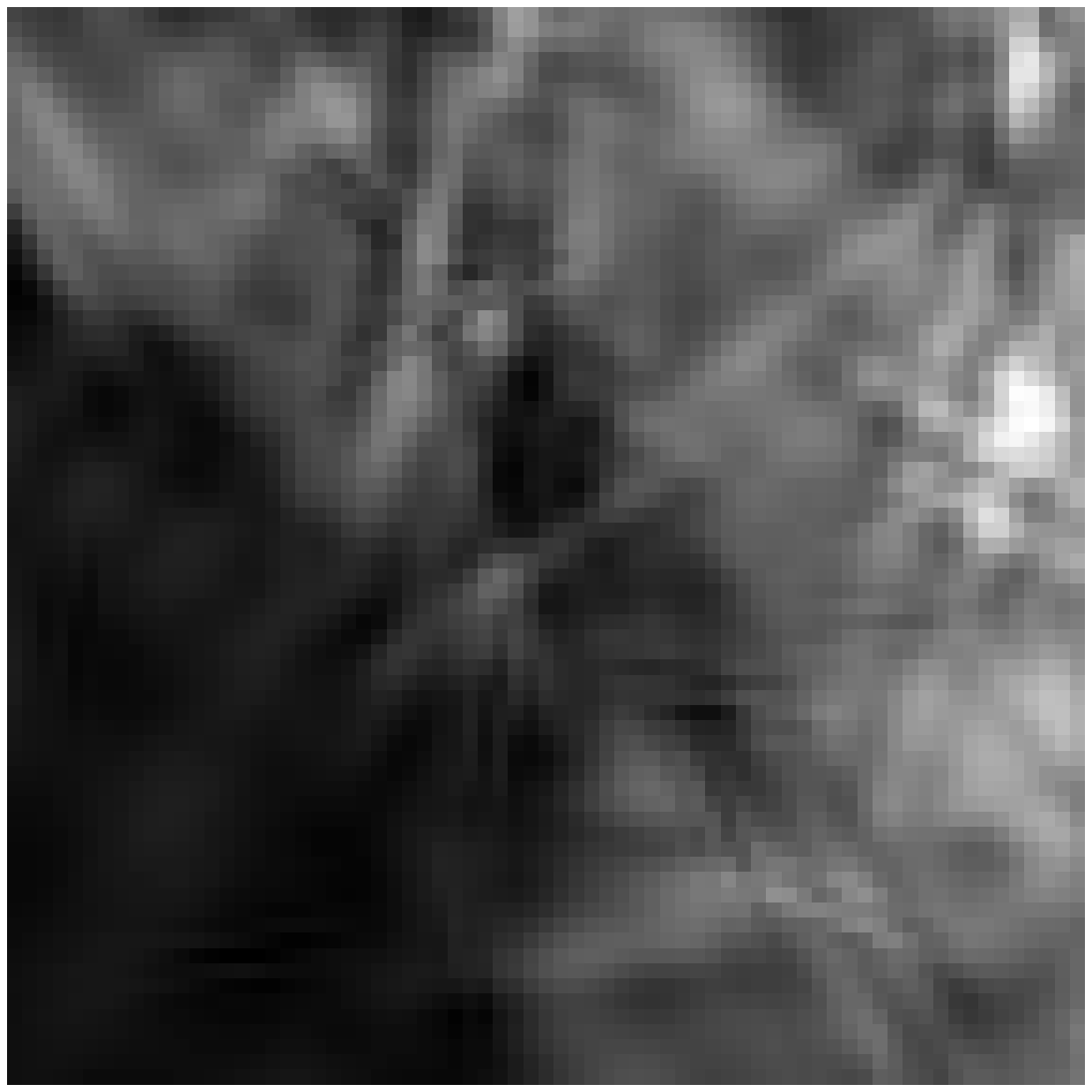}
& \includegraphics[width=4cm]{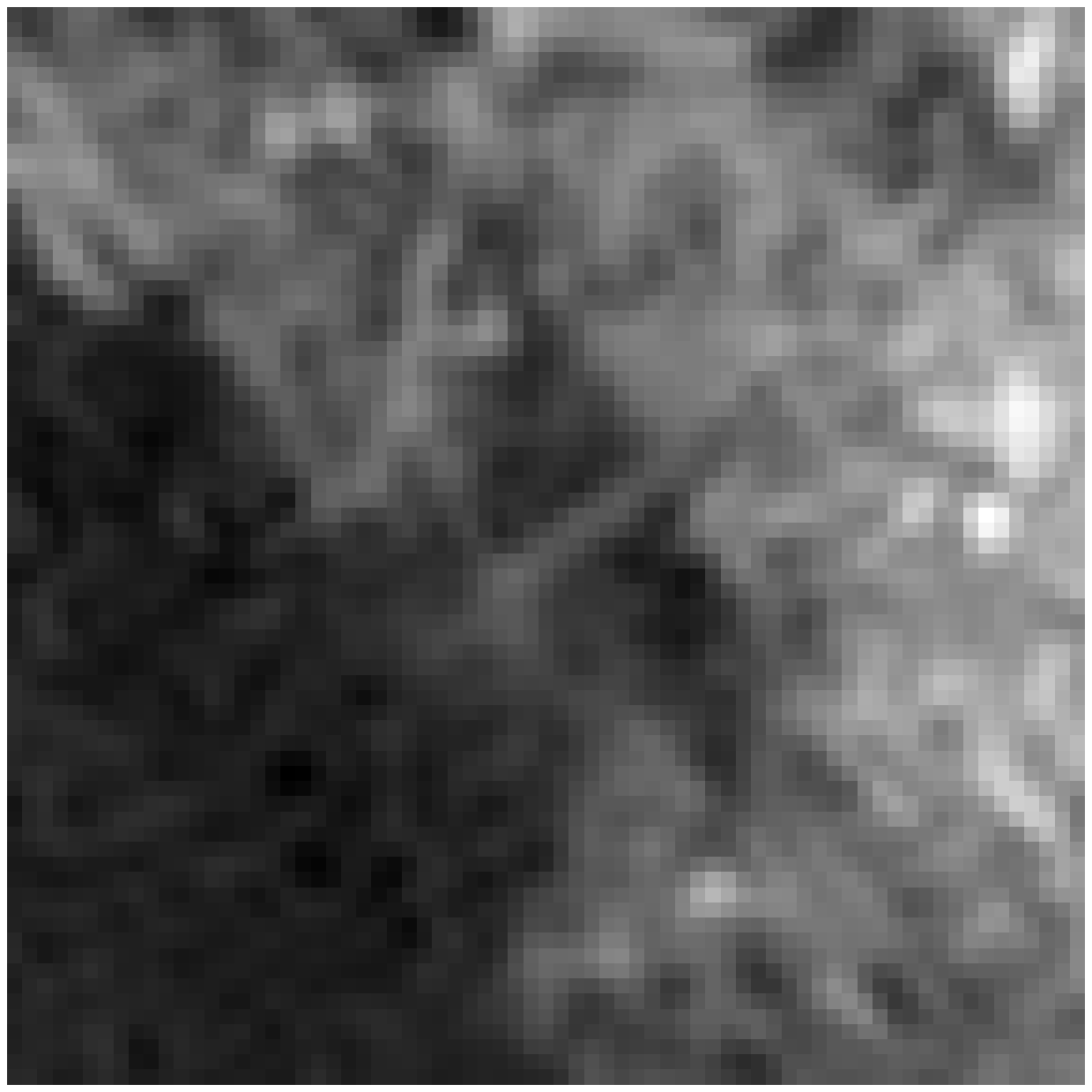}\\
(d) & (e) & (f)\\
 \end{tabular}
\caption{Cropped versions of Tunis image (channel $b=1$, initial SNR equal to $4.66$ dB) and (a) Original image,
(b) Noisy image, (c) Denoised image using ProbShrink red. $(1
\times 1)$, (d) Denoised image using BLS-GSM red. + parent method,
(e) Denoised image using curvelets
and (f) Denoised image using our method (employing a DTT).
\label{fig:zoom1}}
\end{center}
\vspace*{-0.8cm} \end{figure} We can notice that the proposed
method (see Fig. \ref{fig:zoom1}-(f)) allows to better recover
edges whereas the three others (see Fig. \ref{fig:zoom1}-(c,d,e))
result in more blurred images, where some of the original
structures are missing. This is especially visible for the image
denoised with the BLS-GSM estimator (see Fig.
\ref{fig:zoom1}-(d)).

In the following, we focus on the method introduced in this paper and more specifically on the variations of its performance according to the parameter setup.

\subsection{Pre-processing stage}
In order to improve the denoising performance in the multichannel
context, additional linear procedures can be applied.
Actually, different linear pre-processings of the components
may be envisaged:\\
$\bullet$ The simplest idea consists in decorrelating the spectral
components of the image to be estimated in order to process them
separately. Knowing the noise covariance matrix
$\GGamma^{(\mathbf{n})}$, we can deduce the original data
covariance matrix (assumed here to be spatially constant):
$\GGamma^{(\mathbf{s})}=\GGamma^{(\mathbf{r})}\,
-\, \GGamma^{(\mathbf{n})} $,  from the observed data
covariance matrix $\GGamma^{(\mathbf{r})}$. More
precisely, by performing an eigendecomposition of
$\GGamma^{(\mathbf{s})}$, we seek for an orthogonal matrix
$\mathbf{U}^{(\mathbf{s})}$ such that:
$\GGamma^{(\mathbf{s})}=\mathbf{U}^{({\mathbf{s}})}
\mathbf{D}^{(\mathbf{s})} (\mathbf{U}^{(s)})^\top$ where
$\mathbf{D}^{(\mathbf{s})}$ is a diagonal matrix. Then, the
transformed multichannel image is
$((\mathbf{U}^{(\mathbf{s})})^\top\mathbf{r}(\mathbf{k}))_{\mathbf{k}}$
and it is corrupted by a spatially white zero-mean Gaussian noise
with covariance matrix
$(\mathbf{U}^{(\mathbf{s})})^\top\GGamma^{(\mathbf{n})}\mathbf{U}^{(\mathbf{s})}$.
We then proceed to the nonlinear wavelet estimation of the
decorrelated components as described in the previous sections.\\
$\bullet$ Instead of decorrelating the components, we may try to make
them statistically independent or, at least, as independent as
possible. A number of ICA (Independent Component Analysis) methods
have been developed for this purpose in recent years
\cite{Cardoso_J_1993_ieepf_bli_bngs}. In this case, a linear
transform $\mathbf{V}^{(\mathbf{s})}$ (which is not necessarily
orthogonal) is applied to the multichannel data.

The proposed estimator already includes an optimized linear
combination of some of the components of the ROV. It is therefore
expected to provide competitive results w.r.t. techniques
involving some linear pre-processing. In order to make fair
comparisons and evaluate the improvements resulting from the
optimization of the linear part of the estimator, we provide
simulations where the ROV is the same whatever the pre-processing
is (we have chosen the same ROV as in the previous sections). In
addition, when a decorrelation or an ICA is employed, the linear
part of the estimator is chosen equal to the identity. We finally
propose to compare these results with a simple linear MSE
estimator based on a linear combination of coefficients from different channels.

Numerical results displayed in Table \ref{tab:preproc} allow us to
evaluate the proposed approach without optimization of the linear
parameter vector, the same estimator combined with an ICA of the
multichannel data (using the JADE algorithm
\cite{Cardoso_J_1993_ieepf_bli_bngs}) or a pre-decorrelation stage
and, finally our approach with an optimized linear part.
\begin{table}[htbp]
\begin{center}
\caption{Influence of different pre-processings on Tunis image denoising ($\sigma^2=258.9$). Symlets of length $16$ are used. \label{tab:preproc}}
\begin{tabular}{|c ||c || c || c | c | c | c | c |}
\hline
Transform & Channel & $\textrm{SNR}_{\textrm{init}}$ & Without transf. & ICA & Decorrelation & MSE Lin. & Opt. lin. \\
\hline \hline
 & $b=1$ &8.664 & 13.84 & 14.66 & 15.15 & 15.18 & 15.75\\
DWT & $b=2$ & 9.653&14.39 & 15.03 & 15.36 & 15.28 & 15.90\\
& $b=3$ &  8.926& 15.15 & 13.85 & 15.11 & 15.26 & 15.86\\
& Average &9.081 & 14.46 & 14.51 & 15.21 & 15.24 & \textbf{15.84}\\
\hline \hline
 & $b=1$ &8.664  & 14.13 &14.37  & 15.42 & 15.42 & 15.95\\
DTT & $b=2$ & 9.653 & 14.66 &14.67  & 15.64 & 15.53 & 16.10\\
& $b=3$ &  8.926& 15.38 &14.26  & 15.25 & 15.52 & 16.01 \\
& Average &9.081 & 14.72 &14.43  & 15.44 & 15.49 & \textbf{16.02}\\
\hline
\end{tabular}
\end{center}
\vspace*{-0.8cm}
\end{table}
From these results, it is clear
that including some linear processing is useful for multichannel image denoising.
The ICA only brings slight improvements, possibly due to the fact that the associated transform is not orthogonal.
Pre-decorrelating the data significantly increases the SNR,
however the fully optimized version of our estimator remains the most effective method.

\subsection{Influence of the neighborhoods}
The ROV can
be defined as desired and
plays a prominent role in the construction of our estimator.
We study here the influence of different choices of the ROV:
\begin{enumerate}
\item ROV1 corresponds to an inter-component neighborhood. When a
DWT is employed (see Fig. \ref{fig:ROVDWT}(a)), we have
$ \bbf{r}_{j,\mathbf{m}}^{(b)}(\kk) =
[\big(r_{j,\mathbf{m}}^{(b')}(\kk)\big)_{b'}]^\top$,
while for a DTT (see Fig. \ref{fig:ROVDTT}(a)), we use
\begin{align}
\bbf{r}_{j,\mathbf{m}}^{(b)}(\kk) &=
[\big(r_{j,\mathbf{m}}^{(b')}(\kk)\big)_{b'}\,,
\big(r_{j,\mathbf{m}}^{\HH(b')}(\kk)\big)_{b'} ]^\top \quad
\mbox{and}
\quad \bbf{u}_{j,\mathbf{m}}^{(b)}(\kk) = [\big(u_{j,\mathbf{m}}^{(b')}(\kk)\big)_{b'}]^\top\\
\bbf{r}_{j,\mathbf{m}}^{\HH(b)}(\kk) &= [
\big(r_{j,\mathbf{m}}^{\HH(b')}(\kk)\big)_{b'}\,,
\big(r_{j,\mathbf{m}}^{(b')}(\kk)\big)_{b'} ]^\top \quad \quad
\quad \;\; \bbf{u}_{j,\mathbf{m}}^{\HH(b)}(\kk) =
[\big(u_{j,\mathbf{m}}^{\HH(b')}(\kk)\big)_{b'}]^\top.
\end{align}
\item ROV2 corresponds to a combination
of a spatial $3\times3$ and an inter-component neighborhood as
considered in the previous sections and shown in Figs.
\ref{fig:ROVDWT}(b) and \ref{fig:ROVDTT}(b).
\end{enumerate}

\begin{figure}[!h]
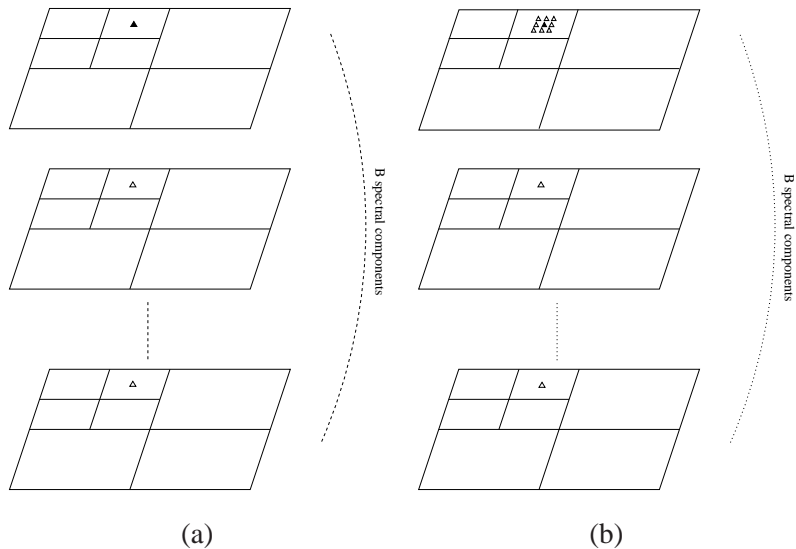

\begin{center}
\begin{tabular}{cc}
\input{ROV_spec.pstex_t} & \input{ROV_spat+spec.pstex_t} \\
(a) & (b) \\
\end{tabular}
\end{center}
\caption{Representation of the different considered ROVs in the
DWT domain (the black triangle will be estimated taking into
account the white ones); (a) ROV1 the purely inter-component one
and (b) ROV2 combining inter-component
and spatial dependencies. \label{fig:ROVDWT}} \vspace*{-0.2cm}
\end{figure}
\begin{figure}[!h]
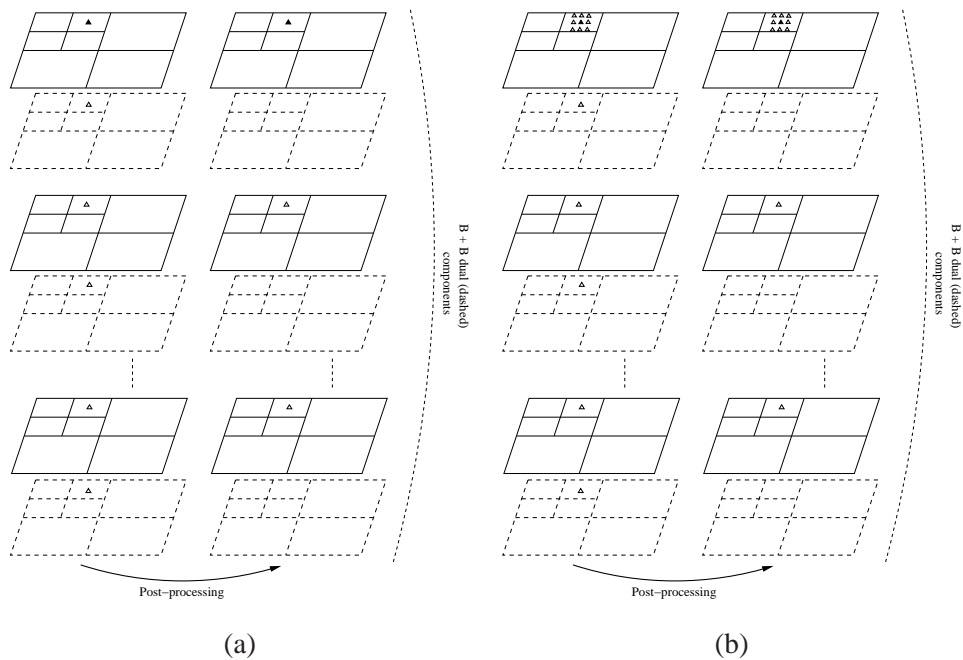

\begin{center}
\begin{tabular}{cc}
\input{ROVDTT_spec.pstex_t} &  \input{ROVDTT_spat+spec.pstex_t} \\
(a) & (b) \\
\end{tabular}
\end{center}
\caption{Representation of the different considered ROVs in the
DTT domain, with and without post-processing stage (the black
triangle will be estimated taking into account the white ones);
(a) ROV1 the purely inter-component one and (b) ROV2 combining inter-component and spatial dependencies.
\label{fig:ROVDTT}} \end{figure}

The linear part of the estimator is defined as in Section \ref{se:compare}.\\
The corresponding results are given in Table \ref{tab:neigh}.
\begin{table}[htbp]
\begin{center}
\caption{Influence of the neighborhood in Tunis image denoising
(average values computed over $3$ channels are provided and $\sigma^2=258.9$) using
symlets (length $16$) (top) and AC filter bank (length $16$)
(bottom). \label{tab:neigh}}
\begin{tabular}{|c || c || c | c || c | c || c | c |}
\hline
Transform & $\textrm{SNR}_{\textrm{init}}$  & ROV1  & ROV2& Transform & $\textrm{SNR}_{\textrm{init}}$  & ROV1  & ROV2\\
\hline \hline 
DWT (symlets)&9.081  & 15.42 & \textbf{15.84} & DWT (AC)&9.081 & 15.48&  \textbf{15.77} \\
DTT (symlets)&9.081&  15.77   & \textbf{16.02} & DTT (AC)&9.081 & 15.88 & \textbf{16.02}\\
\hline
\end{tabular}
\end{center}
\vspace*{-0.8cm}
\end{table}

In order to compare different possible wavelet choices, the
results are provided both for  symlets of length $16$ and a
$4$-band filter bank given in
\cite{Alkin_O_1995_tsp_des_embclpprp} which is denoted by AC.
These results can also be compared with the ones given in Section
\ref{se:compare} where Daubechies filters of length
$8$ are used.\\
Concerning the neighborhood influence, we note that taking into
account spatial dependence leads to a significant improvement
w.r.t. inter-component dependence. In all cases, combining
spectral and spatial neighborhood leads to the best results.\\
Concerning the wavelet choice, it appears that the 4-band AC
wavelets yield slightly better results than the dyadic symlets
choosing ROV1 and equivalent results choosing ROV3.
Both outperform Daubechies wavelets wathever the ROV chosen.

\subsection{Various noise levels}
In this section, we consider that the image channels are corrupted at different
noise levels.
Thus, the noise is spatially white, zero-mean, Gaussian with covariance matrix
$\GGamma^{(\mathbf{n})}_2= \text{Diag}(\sigma_1^2, \ldots,\sigma_B^2)$.
\begin{table}[htbp]
\begin{center}
\caption{Denoising results on Tunis image considering
$\boldsymbol{\Gamma}^{(n)}_2$ and using symlets (length $16$). \label{tab:sigdif}}
\begin{tabular}{|c ||c | c || c | c || c | c |}
\hline
Channel & $\sigma_b^2$ & $\textrm{SNR}_{\textrm{init}}$ & Surevect & Proposed DWT & Surevect DTT  & Proposed DTT\\
\hline \hline
$b=1$ & 25.89 & 18.66  & 20.58  & 21.15 & 20.84 & 21.24\\
$b=2$ & 258.9 & 9.653 & 18.53 & 18.63 & 18.76 & 18.84\\
$b=3$ & 491.9 & 6.138 & 14.20 &  14.57 & 14.55 & 14.71\\
Average & & 11.49 & 17.76 & 18.12 & 18.05 & \textbf{18.26}\\
\hline
\end{tabular}
\end{center}
\vspace*{-0.8cm}
\end{table}

The resulting numerical results are displayed in Table
\ref{tab:sigdif} with the corresponding noise levels, when our
estimator is used with ROV2. Noticeable differences can be
observed when comparing Surevect with our method both considering
DWT and DTT transforms.

\subsection{Increased number of channels}
A strong advantage of the proposed method is that, unlike many
multicomponent approaches limited to RGB ($3$ components) images,
it may process any kind of multichannel images, whatever the
number of channels is. We consider here the $6$ channel Trento
image. We apply the Surevect estimator (both using DWT and DTT), 
the BLS-GSM estimator (taking into account the parent coefficient),
and our estimator using ROV2.
\begin{table}[htbp]
\begin{center}
\caption{Results obtained applying different estimators on Trento
image ($\sigma^2=258.9$). \label{tab:band}}
\begin{tabular}{|c || c || c | c || c | c | c |}
\hline
Channel & $\textrm{SNR}_{\textrm{init}}$  & Surevect & Proposed & BLS-GSM red & Surevect  & Proposed \\
& & & DWT & + parent & DTT & DTT \\
\hline \hline
$b=1$ & -2.907  &8.661  &8.945  & 8.311 & 8.984 & 9.255  \\
$b=2$ & -6.878  &8.375  &8.430  & 6.536 & 8.805 & 8.876 \\
$b=3$ & -3.836  &8.288  &8.430  & 7.341 & 8.647 & 8.749  \\
$b=4$ & 2.428   &9.525  &9.796  & 9.836 & 9.901 & 10.00\\
$b=5$ & 4.765   &11.18  &11.53  & 11.38 & 11.61 & 11.78 \\
$b=6$ & -1.560  &9.545 &9.700  & 8.167 & 9.945 & 10.02 \\
Average & -1.331 &9.262  &9.472  & 8.596 & 9.649  & \textbf{9.780} \\
\hline
\end{tabular}
\end{center}
\vspace*{-0.8cm}
\end{table}
From the results provided in Table \ref{tab:band}, we see
that, while the number of channels is increased, our method still
outperforms the other ones especially when a DTT is used. With the
increase of the number of channels, the reduced redundancy of the
DTT becomes another attractive feature of the

\noindent proposed approach.

\vspace*{-0.4cm}

\section{Conclusion}
\label{sec:concl}

In this paper, we have proposed a nonlinear Stein based estimator
for wavelet denoising of multichannel data. Due to its flexible
form, the considered estimator generalizes many existing methods,
in particular block-based ones. Although the proposed approach has
been applied to satellite images, it could also be used in any
multivariate signal denoising problem. Besides, the estimator has
been used in conjunction with real dual-tree wavelet transforms
but complex ones or other frame decompositions could be envisaged
as well. In the context of frame representations, it should
however be noticed that the proposed estimator minimizes the risk
over the frame coefficients and not on the reconstructed signal,
which may be suboptimal \cite{Luisier_F_2007_tip_new_sureaidisowt,Raphan_M_2007_icip_opt_drb}.
Another
question that should be investigated in a future work is the ability
of the proposed framework to exploit inter-scale dependencies in
addition to spatial and inter-component ones, as considered in
\cite{Luisier_F_2007_tip_new_sureaidisowt} for the mono-channel
case. In order to obtain an interscale denoising method,
an appropriate ROV should be defined and the interscale statistics
of the noise should be available.

{\scriptsize{
}

\end{document}